\newcommand*\circled[1]{\tikz[baseline=(char.base)]{
            \node[shape=circle,draw,minimum size=3.5mm,inner sep=0pt] (char) {#1};}}
\def\submit{1}   
	\newcommand{\full}[1]{}
	\newcommand{\confer}[1]{#1}
	\newcommand{\full}[1]{#1}
	\newcommand{\confer}[1]{}
\newcommand{\cN}{{\cal N}}
\newcommand{\cV}{{\cal V}}
\newcommand{\qed}{\hfill $\Box$}
\newcommand{\cei}[1]{\lceil #1 \rceil}
\newcommand{\pr}{{\rm Pr}}
\newcommand{\Prob}[1]{\pr\left\{#1\right\}}
\newcommand{\eps}{\varepsilon}
\newcommand{\EX}{\hbox{\bf E}}
\newcommand{\Sec}[1]{\hyperref[sec:#1]{\S\ref*{sec:#1}}} %
\newcommand{\Eqn}[1]{\hyperref[eq:#1]{(\ref*{eq:#1})}} %
\newcommand{\Fig}[1]{\hyperref[fig:#1]{Fig.\,\ref*{fig:#1}}} %
\newcommand{\Tab}[1]{\hyperref[tab:#1]{Tab.\,\ref*{tab:#1}}} %
\newcommand{\Thm}[1]{\hyperref[thm:#1]{Thm.\,\ref*{thm:#1}}} %
\newcommand{\Lem}[1]{\hyperref[lem:#1]{Lem.\,\ref*{lem:#1}}} %
\newcommand{\Prop}[1]{\hyperref[prop:#1]{Prop.~\ref*{prop:#1}}} %
\newcommand{\Cor}[1]{\hyperref[cor:#1]{Cor.~\ref*{cor:#1}}} %
\newcommand{\Def}[1]{\hyperref[def:#1]{Defn.~\ref*{def:#1}}} %
\newcommand{\Alg}[1]{\hyperref[alg:#1]{Alg.\,\ref*{alg:#1}}} %
\newcommand{\Ex}[1]{\hyperref[ex:#1]{Ex.~\ref*{ex:#1}}} %
\newcommand{\Clm}[1]{\hyperref[clm:#1]{Claim~\ref*{clm:#1}}} %
\newcommand{\Step}[1]{\hyperref[step:#1]{Step~\ref*{step:#1}}} %
\newcommand{\gcc}{C}
\newcommand{\lcc}{\bar C}
\newcommand{\dcc}{{C}}
\newcommand{\algkappa}{$C$-{\tt wedge sampler}}
\newcommand{\alglcc}{$\lcc$-{\tt wedge sampler}}
\newcommand{\algdcc}{$\dcc_d$-{\tt wedge sampler}}
\newcommand{\algtrid}{$T_d$-{\tt wedge sampler}}
\newcommand{\Omit}[1]{}
\begin{document}

\title{Triadic Measures on Graphs: The Power of Wedge Sampling%
\thanks{This work was funded by the DARPA Graph-theoretic Research
    in Algorithms and the Phenomenology of Social Networks (GRAPHS)
    program and by the DOE ASCR Complex Interconnected Distributed Systems
    (CIDS) program, and Sandia's Laboratory Directed Research \& Development (LDRD) program. Sandia National Laboratories is a multi-program
    laboratory managed and operated by Sandia Corporation, a wholly
    owned subsidiary of Lockheed Martin Corporation, for the
    U.S. Department of Energy's National Nuclear Security
    Administration under contract DE-AC04-94AL85000.}}
\date{}

\author{C. Seshadhri\thanks{Sandia National Laboratories, CA, scomand@sandia.gov} 
\and
Ali Pinar\thanks{Sandia National Laboratories, CA, apinar@sandia.gov} 
\and 
Tamara G. Kolda\thanks{Sandia National Laboratories, CA, tgkolda@sandia.gov} }

\maketitle

\begin{abstract} 
Graphs are used to model interactions in a variety of contexts, and 
there is a growing need to quickly assess the structure  of a graph. 
Some of the most useful graph metrics, especially those measuring social cohesion,
are based on \emph{triangles}.
Despite the  importance of these triadic measures, associated algorithms  can be extremely expensive. 
We propose a new method based on \emph{wedge sampling}. This versatile technique allows for the fast and accurate approximation of all current variants of clustering coefficients and 
enables rapid uniform sampling of the triangles of a graph. 
Our methods come with  \emph{provable} and practical time-approximation tradeoffs for all computations. We provide extensive results that show our methods are orders of magnitude faster than the state-of-the-art, while providing nearly the accuracy of full enumeration. 
Our results will enable more wide-scale adoption of triadic measures for analysis of extremely large graphs, as demonstrated on several real-world examples.

\end{abstract}

\section{Introduction}

Graphs are used  to model  infrastructure networks,  the World Wide Web, computer traffic, molecular interactions, ecological systems, epidemics, citations, and social interactions, among others.  Despite the differences in the motivating applications, some topological structures have emerged to be important across all these domains. 
triangles, which can be explained by homophily (people become friends with those similar to themselves) and transitivity (friends of friends become friends).   This abundance of triangles,  along with  the information they reveal, motivates metrics such as the \emph{clustering coefficient} and the \emph{transitivity ratio}s~\cite{WaFa94, WaSt98}.
The triangle structure of a graph is commonly used in the social sciences for positing various theses
on behavior~\cite{Co88, Po98, Burt04, FoDeCo10}. 
Triangles have also been used  in graph mining applications such as spam detection and finding common topics on the WWW~\cite{EcMo02, BeBoCaGi08}.
The authors' earlier work used distribution of degree-wise clustering coefficients as the driving force for a new generative model, Blocked Two-Level Erd\"os-R\'enyi~\cite{SeKoPi11}.  
The authors' have also observed that  relationships among degrees of triangle vertices can be a descriptor of the underlying graph~\cite{DuPiKo12}.

\subsection{Clustering coefficients} \label{sec:info}
The information about triangles is usually summarized in terms of \emph{clustering coefficients}. 
Let $G$ be a simple undirected graph with $n$ vertices and $m$ edges.
Let $T$ denote the  number of triangles in the graph and $W$ be the number of \emph{wedges} (a path of length $2$).
The most common measure is the \emph{global clustering coefficient} $\gcc = 3T/W$, which 
measures how often friends of friends are also friends. 
We show that we can achieve speed-ups of up to four orders of magnitude with extremely small errors; see \Fig{gcc-time} and  \Fig{gcc-error}.
\begin{figure}[h]
  \centering
  \includegraphics[width=0.9\columnwidth]{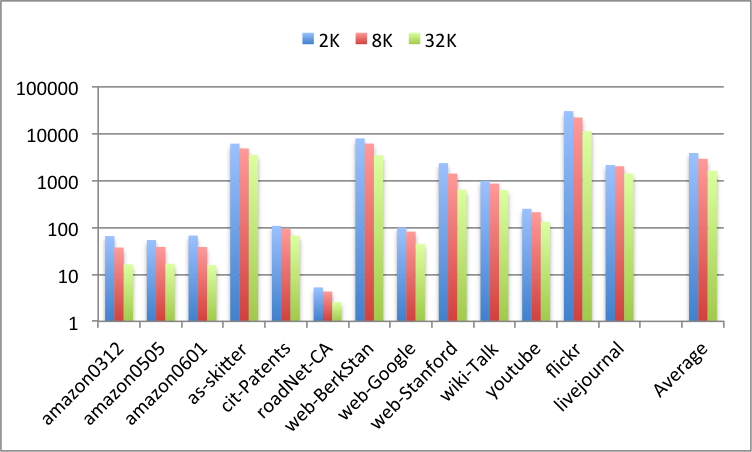}
  \caption{Speed-up over enumeration for global clustering coefficient computation with increasing numbers of wedge samples}
  \label{fig:gcc-time}
\end{figure}
\begin{figure}[h]
  \centering
  \includegraphics[width=0.9\columnwidth]{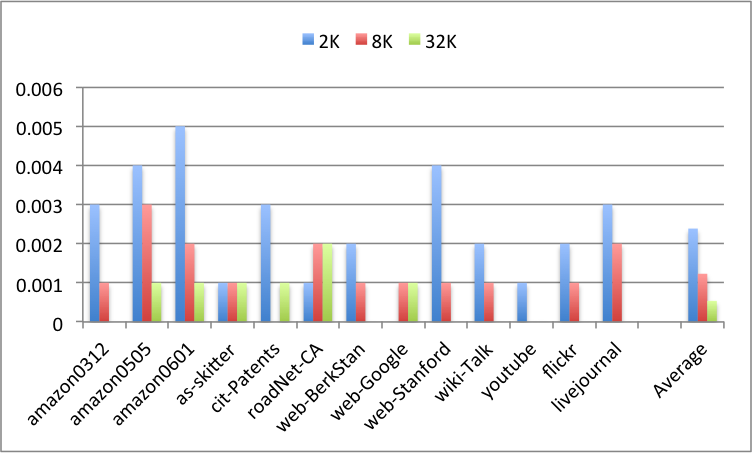}
  \caption{Absolute error in global clustering coefficient for increasing numbers of wedge samples}
  \label{fig:gcc-error}
\end{figure}

Our approach is not limited to clustering coefficients, however.
A per-vertex clustering coefficient, $C_v$,
is defined as the fraction of wedges centered at $v$ that participate in triangles.
The mean value of $C_v$ is called the \emph{local clustering coefficient}, $\lcc$.
A more nuanced view of triangles is given by the \emph{degree-wise clustering coefficients}.
This is a set of values $\{\dcc_d\}$ indexed by degree, where $\dcc_d$ is the average
clustering coefficient of degree $d$ vertices. In addition to degree distribution, many graphs
are characterized by plotting the clustering 
coefficients, i.e., $\dcc_d$ versus $d$. 
We summarize our notation and give formal definitions in \Tab{notation}.

\begin{table}[t]
\caption{Graph notation and clustering coefficients}
\label{tab:notation}
\begin{tabularx}{\linewidth}{@{}lX@{}}
\toprule
{$n$} &  number of vertices\\
{$n_d$} & number of vertices of degree $d$\\
{$m$} & number of edges\\
{$d_v$} & degree of vertex $v$\\
{$V_d$} & set of degree-$d$ vertices\\
{$W$} &  total number of wedges\\
{$W_v$} &  number of wedges centered at vertex $v$\\
{$T$} & total number of triangles \\
{$T_v$} & number of triangles incident to vertex $v$\\
{$T_d$} & number of triangles incident to a degree $d$ vertex\\
\bottomrule
\end{tabularx}
\begin{tabularx}{\linewidth}{@{}lX@{}}
\toprule
$\gcc = 3T/W$ & global clustering coefficient\\
$C_v = T_v/W_v$ & clustering coefficient of vertex $v$\\
$ \lcc = n^{-1} \sum_v C_v$ & local clustering coefficient\\
$ \dcc_d = n_d^{-1} \sum_{v \in V_d} C_v$ & degree-wise clustering coefficient\\
\bottomrule
\end{tabularx}
\end{table}

\subsection{Related Work}

There has been significant work on enumeration of all triangles~\cite{ChNi85,ScWa05,latapy08, BeFoNoPh11,ChCh11}.
Recent work by Cohen~\cite{Co09} and Suri and Vassilvitskii~\cite{SuVa11} give parallel implementations
of these algorithms. Arifuzzaman et al.~\cite{ArKhMa12} give a massively parallel algorithm for computing
clustering coefficients.  Enumeration algorithms however, can be very expensive, since  graphs even of moderate size (millions of vertices) can have
an extremely large number of triangles (see e.g., \Tab{prop}). 
Eigenvalue/trace based methods have been used by Tsourakakis~\cite{Ts08} and Avron~\cite{Av10} to compute
estimates of the total and per-degree number of triangles. However, computing eigenvalues (even just a few of them) is a compute-intensive task and  quickly becomes intractable on large graphs.

Most relevant to our work are sampling mechanisms.
Tsourakakis et al.~\cite{TsDrMi09} started the use of sparsification methods, the most important of which
is Doulion~\cite{TsKaMiFa09}.  This method sparsifies the graph by keeping each edge with probability $p$; counts  the triangles  in the sparsified graph; and multiplies this count by $p^{-3}$ to predict the  number of triangles in the original graph.  
Various theoretical analyses of this algorithm (and its variants) have been
proposed~\cite{KoMiPeTs10,TsKoMi11,PaTs12}. 
One of the main benefits of Doulion is that it reduces large graphs to smaller ones that can be loaded into memory. 
However, their estimate can suffer from high variance~\cite{YoKi11}.
Alternative sampling mechanisms have been proposed for streaming and semi-streaming algorithms \cite{BaKuSi02, JoGh05, BeBoCaGi08,BuFrLeMaSo06}.

Yet, all these fast sampling methods only estimate the number of triangles
and give no information about other triadic measures.

\subsection{Our contributions \label{sec:results}}
In this paper, 
we introduce the simple yet powerful technique of \emph{wedge sampling} for counting
triangles. Wedge sampling is really an algorithmic template, as opposed to a single algorithm, as various algorithms can
be obtained from the same basic structure.
Some of the salient features of this method are:
\begin{asparaitem}
\item \textbf{Versatility of wedge sampling:} We show how to use
  wedge sampling to approximate the various clustering coefficients:
  $\gcc$, $\lcc$, and $\{\dcc_d\}$. From these, we can estimate the
  numbers of triangles: $T$ and $\{T_d\}$. Moreover, our techniques
  can even be extended to find uniform random triangles, which is
  useful for computing triadic statistics.  Other sampling methods
  are usually geared towards only $T$ and $\gcc$.	
\item \textbf{Provably good results with precise bounds:} The
  mathematical analysis of this method is a direct consequence of
  standard Chernoff-Hoeffding bounds.  We obtain explicit
  time-error-accuracy tradeoffs. In other words, if we want an
  estimate for $\gcc$ with error at most 10\% with probability at
  least 99.9\% (say), we know we need only 380 wedge samples. This estimate
  is \emph{independent of the size of the graph}, though the
  preprocessing required by our method is linear in the number of
  edges (to obtain the degree distribution).
\item \textbf{Fast and scalable:} Our estimates converge rapidly, well within the theoretical bounds
  provided.  Although there is no other method that competes directly
  with wedge sampling, we compare with Doulion~\cite{TsKaMiFa09}.
  Our experimental studies show that  our  wedge sampling algorithm is far faster, when the variances of the two methods are similar (see \Tab{Dgcc} in the appendix).
  We do not compare to eigenvalue-based approaches since
 they are much more expensive for larger graphs.
\end{asparaitem}

\section{The wedge sampling method} \label{sec:wedge}

We present the general method of wedge sampling for estimating clustering coefficients. 
In later sections, we instantiate this for different algorithms.

We say a wedge is \emph{closed} if it is part of a
triangle; otherwise, we say the wedge is \emph{open}. Thus,
in \Fig{undirected-graph}, 
\circled{5}-\circled{4}-\circled{6} is an open wedge, while
\circled{3}-\circled{4}-\circled{5} is a closed wedge.
The middle vertex of a wedge is called its \emph{center}, i.e.,
wedges \circled{5}-\circled{4}-\circled{6} and
\circled{3}-\circled{4}-\circled{5} are centered at
\circled{4}.

\begin{figure}[htp]
  \centering
  \includegraphics[height=1in]{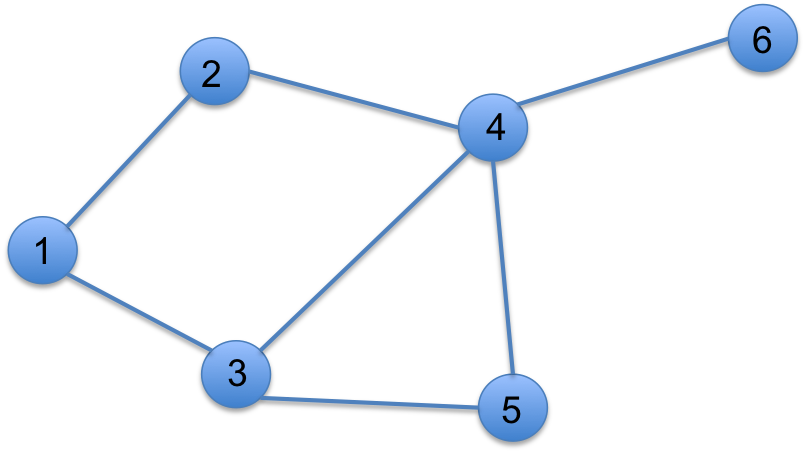}
  \caption{Example graph with 12 wedges and 1 triangle.}
  \label{fig:undirected-graph}
\end{figure}

Wedge sampling is best understood in terms of the following thought experiment.
Fix some distribution over wedges and
let $w$ be a random wedge. 
Let $X$ be the indicator random variable that is $1$
if $w$ is closed and $0$ otherwise.  Denote $\mu = \EX[X]$.

Suppose we wish to estimate $\mu$. We simply generate $k$
independent random wedges $w_1, w_2, \ldots, w_k$, with associated
random variables $X_1, X_2, \ldots, X_k$. Define
$\bar{X} = \frac{1}{k} \sum_{i \leq k} X_i$ as our estimate. The
Chernoff-Hoeffding bounds give guarantees on $\bar X$, as follows.
\begin{theorem}[Hoeffding \cite{Ho63}]
  \label{thm:Hoeffding}
  Let $X_1, X_2, \dots, X_k$ be independent random variables with $0
  \leq X_i \leq 1$ for all $i=1,\dots,k$.  Define $\bar X =
  \frac{1}{k} \sum_{i=1}^k X_i$. Let $\mu = \EX[\bar X]$. 
  Then for $\eps \in (0,1)$, we have
  \begin{displaymath}
    \Prob{ |\bar X - \mu | \geq \eps } \leq 2 \exp(-2 k \eps^2).
  \end{displaymath}
\end{theorem}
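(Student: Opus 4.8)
The plan is to use the exponential-moment (Chernoff) method together with Hoeffding's lemma bounding the moment generating function of a bounded, mean-zero random variable. I would first handle the upper tail $\Prob{\bar X - \mu \geq \eps}$. For any parameter $t > 0$, applying Markov's inequality to the nonnegative random variable $\exp(t(\bar X - \mu))$ gives
\[
  \Prob{\bar X - \mu \geq \eps} \;\leq\; e^{-t\eps}\, \EX\!\left[\exp\!\big(t(\bar X - \mu)\big)\right].
\]
Writing $\mu_i = \EX[X_i]$ and using independence of the $X_i$, the expectation factors as $\prod_{i=1}^k \EX\!\left[\exp\big((t/k)(X_i - \mu_i)\big)\right]$.

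The key step is to bound each factor. Since $0 \leq X_i \leq 1$, the centered variable $X_i - \mu_i$ takes values in an interval of length $1$ and has mean zero, so Hoeffding's lemma yields $\EX[\exp(s(X_i-\mu_i))] \leq \exp(s^2/8)$ for every real $s$. Applying this with $s = t/k$ and multiplying over $i$ gives $\EX[\exp(t(\bar X - \mu))] \leq \exp(t^2/(8k))$, hence $\Prob{\bar X - \mu \geq \eps} \leq \exp\!\big(-t\eps + t^2/(8k)\big)$. Optimizing the exponent over $t > 0$ — the minimum is attained at $t = 4k\eps$ — produces the bound $\exp(-2k\eps^2)$.

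To finish, I would repeat the argument with $-\bar X$ in place of $\bar X$ (equivalently, apply the first part to $1 - X_i$, which also lies in $[0,1]$) to obtain the matching bound $\Prob{\mu - \bar X \geq \eps} \leq \exp(-2k\eps^2)$ for the lower tail, and then a union bound over the two one-sided events gives the factor of $2$ in the statement.

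I expect the main obstacle to be the proof of Hoeffding's lemma itself: one shows that the cumulant generating function $\psi(s) = \log \EX[\exp(s(X_i-\mu_i))]$ satisfies $\psi(0) = \psi'(0) = 0$ and $\psi''(s) \leq 1/4$ (the variance of any random variable supported in an interval of length $1$ is at most $1/4$, and $\psi''(s)$ is such a variance under an exponentially tilted law), and then a Taylor expansion with remainder gives $\psi(s) \leq s^2/8$. This is a standard but slightly delicate convexity computation; everything else is routine algebra and a union bound.
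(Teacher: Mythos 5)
Your proposal is correct: the Chernoff exponential-moment argument, the factorization by independence, the bound $\EX[\exp(s(X_i-\mu_i))]\leq\exp(s^2/8)$ from Hoeffding's lemma (range of length $1$), the optimization at $t=4k\eps$ yielding $\exp(-2k\eps^2)$, and the union bound over the two tails all check out. Note, however, that the paper offers no proof of this statement at all --- it is quoted as a known result and attributed to Hoeffding \cite{Ho63} --- so there is nothing internal to compare against; your argument is essentially the standard (and essentially Hoeffding's original) proof, with the only genuinely nontrivial ingredient being Hoeffding's lemma, which you correctly identify and sketch via the bound $\psi''(s)\leq 1/4$ for the tilted law.
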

Hence, if we set $k = \cei{0.5 \eps^{-2}\ln(2/\delta)}$, then
$\Pr[|\bar{X} - \mu| > \eps] < \delta$. In other words,
with confidence at least $1-\delta$, 
the error in our estimate is at most $\eps$.

\Fig{level_curves} shows the number of samples needed
for different error rates. We show three different curves for
difference confidence levels. Increasing the confidence has minimal
impact on the number of samples. The number of samples is fairly low
for error rates of 0.1 or 0.01, but it increases with the inverse
square of the desired error. 
\begin{figure}[htp]
  \centering
  \includegraphics{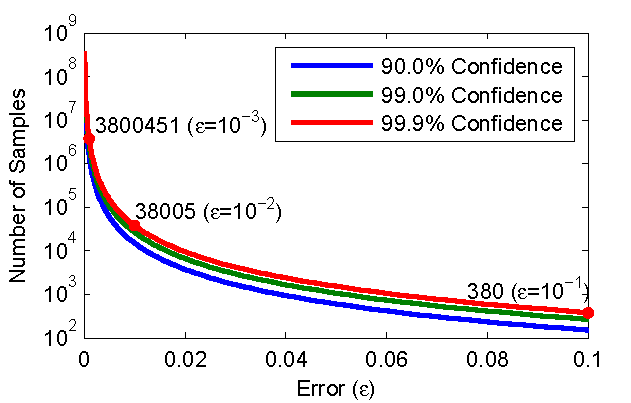}
  \caption{The number of samples needed for different error rates and
    different levels of confidence. A few data points at 99.9\%
    confidence are highlighted.}
  \label{fig:level_curves}
\end{figure}

\section{Computing the global clustering coefficient and the number of triangles} \label{sec:gcc}

\newcommand\cta[1]{\multicolumn{1}{|c|}{#1}}
\newcommand\ct[1]{\multicolumn{1}{c|}{#1}}
\begin{table*}[t]
 \caption{  Properties of the graphs }
\label{tab:prop}
  \centering\small
  \begin{tabular}{|>{\tt}r@{\,}|*{7}{r@{\,}|}}
    \hline
&&&&&&& Time\\
            \cta{Graph} &\ct{$n$}&\ct{$m$}&\ct{$W$}&\ct{$T$}&\ct {$\gcc$} &\ct{$\lcc$}& (secs) \\
    \hline
                          amazon0312 &   401K &  2350K &    69M &  3686K &	0.160 & 0.421 & 0.261	 \\
             amazon0505 &   410K &  2439K &    73M &  3951K &  	0.162 & 0.427& 0.269	 \\
             amazon0601 &   403K &  2443K &    72M &  3987K &  0.166 & 0.430& 0.268	 \\
             as-skitter &  1696K & 11095K & 16022M & 28770K & 0.005	& 0.296 & 90.897	 \\	
                        cit-Patents &  3775K & 16519K &   336M &  7515K & 0.067 & 0.092& 3.764	\\
            	             roadNet-CA &  1965K &  2767K &     6M &   121K & 0.060& 0.055 & 0.095 \\ 
          	           web-BerkStan &   685K &  6649K & 27983M & 64691K & 0.007& 0.634  & 54.777 \\
             web-Google &   876K &  4322K &   727M & 13392K & 0.055  & 0.624& 0.894	 \\
           web-Stanford &   282K &  1993K &  3944M & 11329K & 0.009 & 0.629& 6.987  \\
              wiki-Talk &  2394K &  4660K & 12594M &  9204K &  0.002 & 0.201&  20.572	  \\ 
              youtube & 1158K&  2990K& 1474M& 3057K& 0.006& 0.128 & 2.740  \\
              flickr & 1861K& 15555K  &14670M & 548659K &   0.112 &  0.375 & 567.160 \\
              livejournal & 5284K & 48710K& 7519M& 310877K& 0.124 & 0.345& 102.142  \\    \hline
 \end{tabular}
\end{table*}

We use the wedge sampling scheme to estimate the global clustering coefficient, $\gcc$. 
Consider the uniform distribution on wedges. We can interpret
$\EX[X]$ as the probability that a uniform random wedge is closed
or, alternately, the fraction of closed wedges.

To generate a uniform random wedge, note that the number of wedges
centered at vertex $v$ is $W_v = {d_v \choose 2}$ and $W = \sum_v
W_v$.  We set $p_v = W_v/W$ to get a distribution over the vertices.
Note that the probability of picking $v$ is proportional to the number
of wedges centered at $v$.
A uniform random wedge centered at $v$ can be generated by choosing two
random neighbors of $v$ (without replacement).

\begin{claim} \label{clm:random} 
  Suppose we choose vertex $v$ with probability $p_v$
  and take a uniform random pair of neighbors of $v$. 
  This generates a uniform random wedge.
\end{claim}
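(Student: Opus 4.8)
The plan is to verify that the two-stage procedure induces exactly the uniform distribution on the set of all wedges, by a direct computation of the probability of each possible output. First I would pin down the sample space. Since $G$ is simple, every wedge is a path $a\!-\!v\!-\!b$ in which the endpoints are distinct ($a\neq b$, as otherwise two parallel edges would be needed between $a$ and $v$) and the center $v$ is uniquely determined as the middle vertex. Thus a wedge carries exactly the data of a pair $(v,\{a,b\})$, where $\{a,b\}$ is an unordered pair of distinct neighbors of $v$; conversely, each such pair determines a distinct wedge. Consequently the outcomes of the procedure — first a vertex $v$, then an unordered pair of two of its neighbors selected without replacement — are in bijection with the wedges of $G$, with no over- or under-counting.

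Next I would compute, for an arbitrary fixed wedge $w=(v,\{a,b\})$, the probability that the procedure outputs $w$. By the chain rule this equals the probability of first selecting $v$ times the conditional probability of then selecting the pair $\{a,b\}$. The first factor is $p_v = W_v/W$ by definition. Conditioned on $v$, choosing two of its neighbors uniformly without replacement and ignoring their order makes each of the $\binom{d_v}{2}=W_v$ unordered pairs equally likely, so the second factor is $1/W_v$. Hence $\Pr[\text{output}=w] = (W_v/W)\cdot(1/W_v) = 1/W$, which does not depend on $w$.

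Finally, since every one of the $W$ wedges is output with the same probability $1/W$, the output distribution is uniform over wedges, which is the claim. I do not anticipate a genuine obstacle here: the proof is a one-line application of the chain rule, and the only step needing a moment's care is the bookkeeping in the first paragraph — confirming that the sample space of the procedure is genuinely in bijection with the wedges — which is precisely where simplicity of $G$ (so that $a\neq b$ and the center is well defined) and the order-oblivious, without-replacement neighbor selection are used.
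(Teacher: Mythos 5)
Your proof is correct and follows essentially the same route as the paper's: the paper also computes the probability of a fixed wedge as $p_v \cdot 1/\binom{d_v}{2} = (W_v/W)\cdot(1/W_v) = 1/W$ and concludes uniformity. Your extra bookkeeping about the bijection between wedges and pairs $(v,\{a,b\})$ is a harmless elaboration of the same argument.
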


\begin{proof} Consider some wedge $w$ centered at vertex $v$. 
The probability that $v$ is selected is $p_v=W_v/W$.
The random pair has probability of $1/{d_v \choose 2} = 1/W_v$.
Hence, the net probability of $w$ is $1/W$.
\qed
\end{proof}

\begin{algorithm}
\caption{ \algkappa{} }
\label{alg:C}
\begin{algorithmic}[1]
\STATE	Compute $p_v$ for all vertices
\STATE	Select $k$ random vertices (with replacement) according to 
probability distribution defined by $\{p_v\}$. 
\STATE	 For each selected vertex $v$, choose a uniform random pair of
neighbors of $v$ to generate a wedge. 
\STATE	Output fraction of closed wedges as estimate of $\gcc$.
\end{algorithmic}
\end{algorithm}

\Alg{C} shows the randomized algorithm \algkappa{} for estimating $\gcc$ in a graph $G$.
Combining the bound of \Thm{Hoeffding} with \Clm{random}, we get the following theorem.
\begin{theorem} \label{thm:kappa} 
Set $k = \cei{0.5 \eps^{-2}\ln(2/\delta)}$. The algorithm \algkappa{}
outputs an estimate 
$\bar X$ for the clustering coefficient $\gcc$
such that $|\bar{X} - \gcc| < \eps$
with probability greater than $(1-\delta)$.
\end{theorem}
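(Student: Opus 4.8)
The plan is to verify that \Alg{C} is exactly an instance of the abstract wedge-sampling experiment of \Sec{wedge} under the \emph{uniform} distribution on wedges, and then quote \Thm{Hoeffding} verbatim. First I would invoke \Clm{random}: since Step~2 draws each of the $k$ vertices independently (with replacement) according to $\{p_v\}$ and Step~3 picks an unordered pair of neighbors of the chosen vertex uniformly, each generated wedge is a uniform random wedge of $G$, and the $k$ wedges are mutually independent. Let $X_i$ be the indicator that the $i$-th wedge is closed; then $X_1,\dots,X_k$ are i.i.d.\ with $0\le X_i\le 1$, and $\bar X=\frac1k\sum_{i\le k}X_i$ is precisely the fraction output in Step~4.

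Next I would identify $\mu:=\EX[X_i]$ with $\gcc$. Because the $i$-th wedge is uniform over the $W$ wedges of $G$, $\mu$ equals the fraction of wedges that are closed. A wedge $u\!-\!v\!-\!w$ is closed iff $uw\in E$, i.e.\ iff $\{u,v,w\}$ spans a triangle, and the map sending a closed wedge to that triangle is exactly $3$-to-$1$ (each triangle yields one closed wedge per vertex, and the three vertices of the wedge determine the triangle uniquely). Hence the number of closed wedges is $3T$, so $\mu=3T/W=\gcc$ and $\EX[\bar X]=\gcc$.

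Finally I would apply \Thm{Hoeffding} with this $\bar X$ and $\mu=\gcc$: for $\eps\in(0,1)$, $\Prob{|\bar X-\gcc|\ge\eps}\le 2\exp(-2k\eps^2)$. Substituting $k=\cei{0.5\eps^{-2}\ln(2/\delta)}\ge 0.5\eps^{-2}\ln(2/\delta)$ gives $2\exp(-2k\eps^2)\le 2\exp(-\ln(2/\delta))=\delta$; the ceiling (which in all non-degenerate choices of $\eps,\delta$ makes the inequality on $k$ strict) then yields the strict bound $\Prob{|\bar X-\gcc|\ge\eps}<\delta$, equivalently $|\bar X-\gcc|<\eps$ with probability greater than $1-\delta$.

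There is essentially no hard step: the statement is a routine composition of \Clm{random} (which certifies the sampling distribution is uniform, hence that $\bar X$ is an unbiased estimator of $\gcc$) with \Thm{Hoeffding} (which supplies the concentration) and the arithmetic of the stated $k$. The only point deserving a line of care is the combinatorial identity ``number of closed wedges $=3T$'', which is what makes $\EX[\bar X]=\gcc$ rather than some other multiple of $T/W$; everything else is bookkeeping.
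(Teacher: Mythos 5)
Your proposal is correct and follows essentially the same route as the paper: combine \Clm{random} (uniform wedge sampling) with \Thm{Hoeffding} and the choice $k = \cei{0.5\eps^{-2}\ln(2/\delta)}$, identifying $\EX[\bar X]$ with the fraction of closed wedges $3T/W = \gcc$. The only difference is that you spell out the $3$-to-$1$ correspondence between closed wedges and triangles and the strictness bookkeeping, which the paper leaves implicit.
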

Note that the number
of samples required is \emph{independent of the graph size}, but
computing $p_v$ does depend on the number of edges, $m$. 

To get an estimate on $T$, the number of triangles, we output $\bar{X}
\cdot W/3$, which is guaranteed to be within $\pm \eps W/3$ of
$T$ with probability greater than $1-\delta$.

\subsection{Experimental results}
\label{sec:gcc-exp}

We  implemented our algorithms in {\tt C} and ran our experiments on a
computer equipped with a 2.3GHz Intel core i7 processor with 4~cores
and  256KB  L2 cache (per core), 8MB L3 cache, and an 8GB memory.  
We performed our experiments on 13 graphs  from
SNAP~\cite{Snap} and per private communication
with the authors of~\cite{MiMaGu07}.
In all cases, directionality is ignored, and repeated and  self-edges are omitted. 
The properties of these matrices are presented in \Tab{prop}, where
$n$, $m$, $W$, and $T$  are the numbers of vertices, edges, wedges,
and triangles, respectively. And $\gcc$ and $\lcc$ correspond to  the global and local clustering coefficients. 
The last column reports the times for the enumeration algorithm. Our enumeration algorithm is based on
the principles of \cite{ChNi85, ScWa05, Co09, SuVa11}, such that each
edge is assigned to its vertex with a smaller degree  (using the
vertex numbering as a tie-breaker), and  then vertices only check
wedges formed by edges assigned to them for closure. 

As  seen in \Fig{gcc-time}  wedge sampling works orders of magnitude faster than the enumeration algorithm. 
Detailed results on times can be seen in \Tab{gcc} in the appendix. 
The timing results show tremendous savings; for instance,   wedge
sampling only takes 0.015 seconds on \texttt{as-skitter} while full
enumeration takes 90 seconds.

\Fig{gcc-error} show the accuracy of  the wedge sampling algorithm. Again detailed results on times can be seen in \Tab{gcc} in the appendix. 
At 99.9\% confidence ($\delta=0.001$), the upper bound on the error we expect for 2K, 8K,
and 32K samples is .043, .022, and .011, respectively. Most of the
errors are much less than the bounds would suggest. For instance, the
maximum error for 2K samples is .007, much less than that 0.43 given
by the upper bound.
\Fig{amazon0505_cc} shows the fast convergence of the clustering
coefficient estimate (for the graph {\tt amazon0505}) as the number of samples increases. The dashed
line shows the error bars at 99.9\% confidence.  
In all our experiments, the real error is
always much smaller than  what is  indicated by  \Thm{Hoeffding}. 
\begin{figure}[thp]
  \centering
  \includegraphics{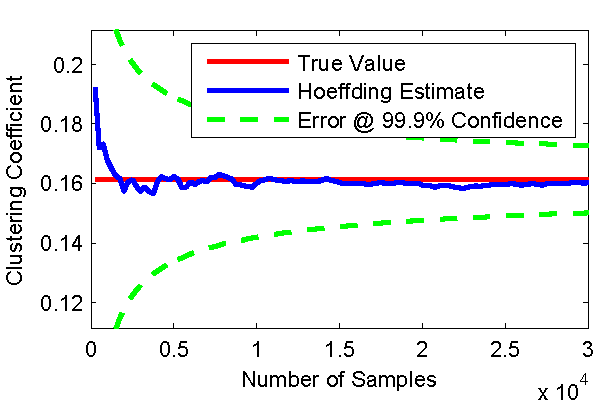}
  \caption{Convergence of clustering coefficient estimate as the
    number of samples increases for  \texttt{amazon0505}.}
  \label{fig:amazon0505_cc}
\end{figure}

\section{Computing the local clustering coefficient} \label{sec:lcc}

We now demonstrate how a small change to the underlying distribution on wedges
allows us to compute the local clustering coefficient,
$\lcc$. \Alg{lcc} shows
the procedure \alglcc{}.
The only difference between  \alglcc{} and \algkappa{} is in picking
random vertex centers. Vertices are picked uniformly instead of from the distribution $\{p_v\}$. 
\medskip
\begin{algorithm}
\caption{ \alglcc{} \label{alg:lcc}}
\begin{algorithmic}[1]
\STATE Pick $k$ uniform random vertices.
\STATE For each selected vertex $v$, choose a uniform random pair of neighbors of $v$ to generate a wedge.
\STATE Output the fraction of closed wedges as an estimate for $\lcc$.
\end{algorithmic}
\end{algorithm}

\begin{theorem} \label{thm:lcc} 
Set $k = \cei{0.5 \eps^{-2}\ln(2/\delta)}$. The algorithm \alglcc{}
outputs an estimate 
$\bar X$ for the clustering coefficient $\lcc$ such that 
$|\bar{X} - \lcc| < \eps$
with probability greater than $(1-\delta)$.
\end{theorem}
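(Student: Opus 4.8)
The plan is to mirror the analysis of \Thm{kappa}, replacing the global closure probability by a per-vertex one. Let $X_1,\dots,X_k$ be the indicators with $X_i=1$ iff the $i$-th sampled wedge is closed. Since the $k$ vertex centers are drawn independently and uniformly in step~1, and the neighbor pairs in step~2 are drawn independently, the $X_i$ are i.i.d.\ and bounded in $[0,1]$, so $\bar X = k^{-1}\sum_{i\le k} X_i$ is exactly the kind of average to which \Thm{Hoeffding} applies. It then remains only to identify $\EX[X_i]$; once we show $\EX[X_i]=\lcc$, the result follows by the same substitution $k=\cei{0.5\eps^{-2}\ln(2/\delta)}$ used right after \Thm{Hoeffding}, which forces $2\exp(-2k\eps^2)\le\delta$ and hence $\Prob{|\bar X-\lcc|<\eps}\ge 1-\delta$.

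For the expectation, I would condition on the vertex $v$ selected in step~1, which occurs with probability $1/n$. Given such a $v$ with $d_v\ge 2$, step~2 produces a uniformly random unordered pair of neighbors of $v$, i.e.\ a uniformly random wedge centered at $v$; of the $W_v=\binom{d_v}{2}$ wedges centered at $v$, exactly $T_v$ are closed, so $\Prob{X_i=1 \mid v} = T_v/W_v = C_v$. Summing over the choice of center, $\EX[X_i] = n^{-1}\sum_v C_v = \lcc$, by the definition of $\lcc$ in \Tab{notation}. This is the one genuinely content-bearing step — the point being that because centers are sampled \emph{uniformly} (rather than with weights $p_v$ as in \algkappa{}), the conditional closure probability is precisely the per-vertex $C_v$ and the outer average is unweighted, so the two combine into $\lcc$ rather than $\gcc$.

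The only subtlety worth flagging is vertices with $d_v\le 1$, for which no wedge exists and $C_v=T_v/W_v$ is the indeterminate $0/0$. I would handle this by adopting the standard convention $C_v=0$ when $d_v\le 1$ and stipulating that a draw of such a vertex yields an open wedge, contributing $0$ to $\bar X$; with this convention the identity $\EX[X_i]=n^{-1}\sum_v C_v=\lcc$ is unaffected. (Equivalently, one may simply assume the input graph has minimum degree at least $2$.) Beyond this bookkeeping, the argument is a verbatim reuse of the Hoeffding application already carried out for \Thm{kappa}, so I expect no real obstacle; the "hard part" is purely making sure the conditional closure probability is exactly $C_v$ and not a reweighted quantity.
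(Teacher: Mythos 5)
Your proof is correct and follows essentially the same route as the paper's: condition on the uniformly chosen center $v$, identify the conditional closure probability as $C_v$, average to get $\lcc = \EX[X]$, and invoke \Thm{Hoeffding}. Your explicit handling of vertices with $d_v \le 1$ is a small bookkeeping point the paper glosses over, but it does not change the argument.
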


\begin{proof} 
Let us consider a single sampled wedge $w$, and let $X(w)$
be the indicator random variable for the wedge being closed. Let $\cV$
be the uniform distribution on wedges. For any vertex $v$, let $\cN_v$
be the uniform distribution on pairs of neighbors of $v$.
Observe that 
$$ {\EX[X]} = \Pr_{v \sim \cV}[\Pr_{(u,u') \sim \cN_v}[\textrm{wedge $\{(u,v),(u',v)\}$ is closed}]] $$
We will show that this is exactly $\lcc$.
\begin{eqnarray*}
\lcc & = & n^{-1} \sum_v C_v = \EX_{v \sim \cV} [C_v] \\
& = & \EX_{v \sim \cV} [\textrm{frac. of closed wedges centered at $v$}] \\
& = & \EX_{v \sim \cV} [\EX_{(u,u') \sim \cN_v}[X(\{(u,v),(u',v)\})]] \\
& = & \Pr_{v \sim \cV}[\Pr_{(u,u') \sim \cN_v}[\textrm{$\{(u,v),(u',v)\}$ is closed}]] \\
& = & \EX[X]
\end{eqnarray*}
For a single sample, the probability that the wedge is closed is exactly $\lcc$.
The bound of \Thm{Hoeffding} completes the proof.
\qed
\end{proof}

\Fig{lcc-error} and \Fig{lcc-times} present  the results of our experiments for computing the local clustering coefficients. 
More detailed results can be found \Tab{lcc} in the appendix. 
Experimental setup and the notation are  the same as in \Sec{gcc-exp}. 
The results again show that wedge sampling provides accurate
estimations with tremendous improvements in runtime. In this case, we
come closer to the theoretical error bounds. For instance, the largest
different in the 2K sample case is 0.017, which is much closer to the
theoretical error bound of 0.043.

\begin{figure}[ht]
  \centering
  \includegraphics[width=.45\textwidth]{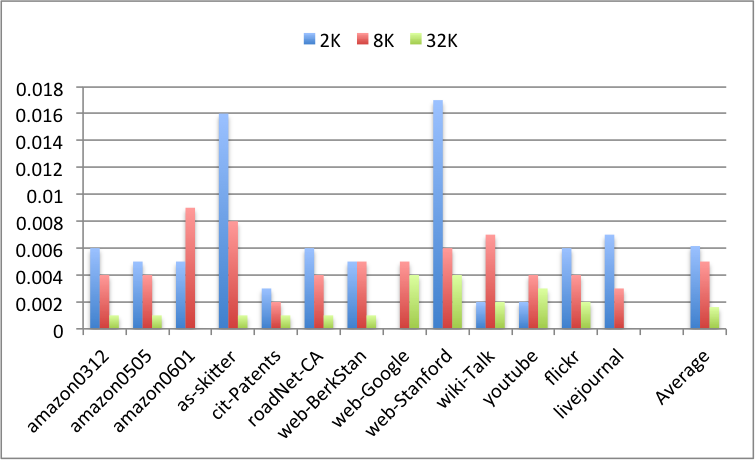}
  \caption{ Absolute error in local clustering coefficient for increasing numbers of wedge samples}
  \label{fig:lcc-error}
\end{figure}

\begin{figure}[ht]
  \centering
  \includegraphics[width=.45\textwidth]{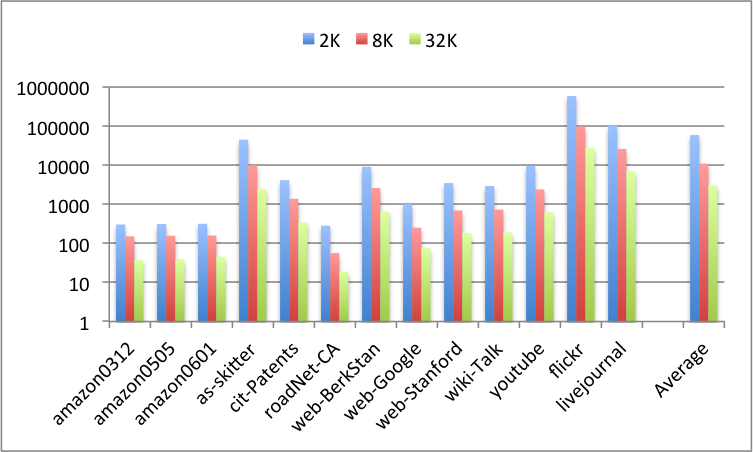}
  \caption{ Speed-up in local clustering coefficient computation time for increasing numbers of wedge samples}
  \label{fig:lcc-times}
\end{figure}

\section{Computing degree-wise clustering\\ coefficients and triangle estimates} \label{section:deg-cc}

\begin{figure*}[tb]
\centering

    \subfloat{
    \includegraphics[width=2in,trim=0 0 0 0]{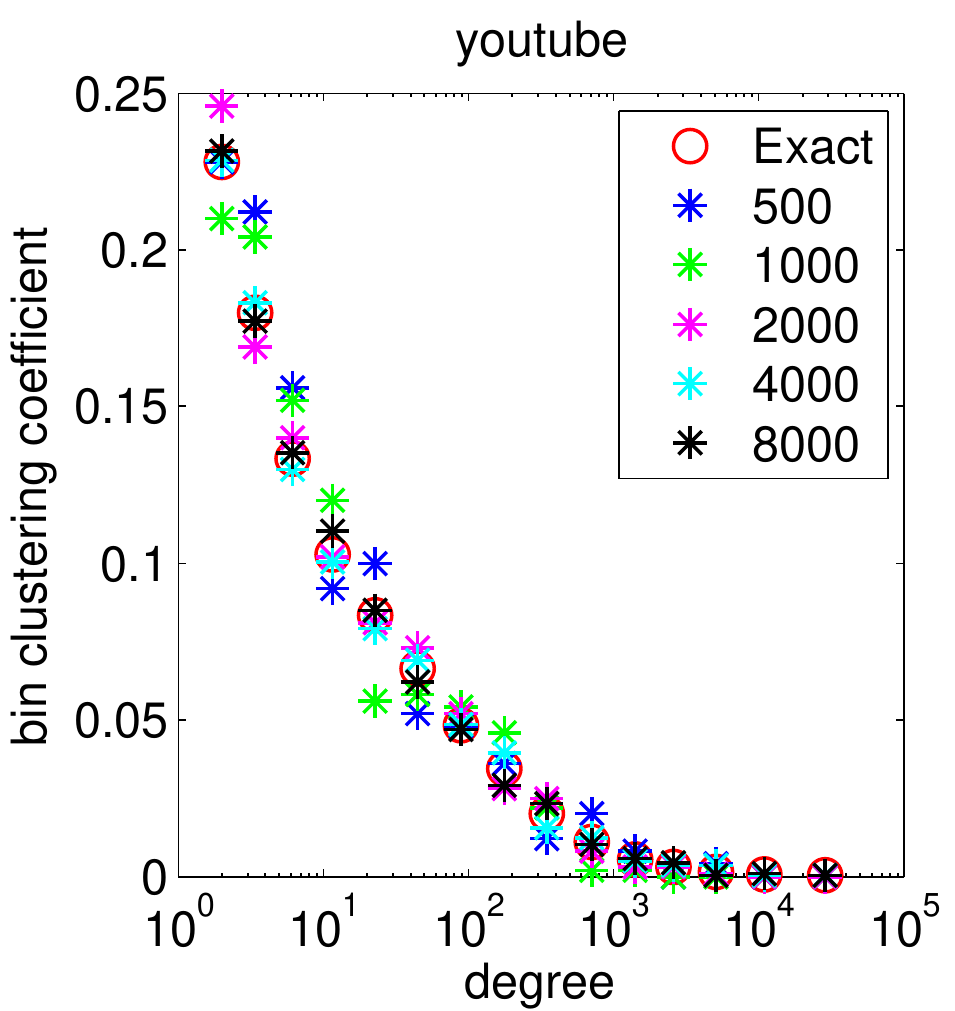}
  }
    \subfloat{
    \includegraphics[width=2in,trim=0 0 0 0]{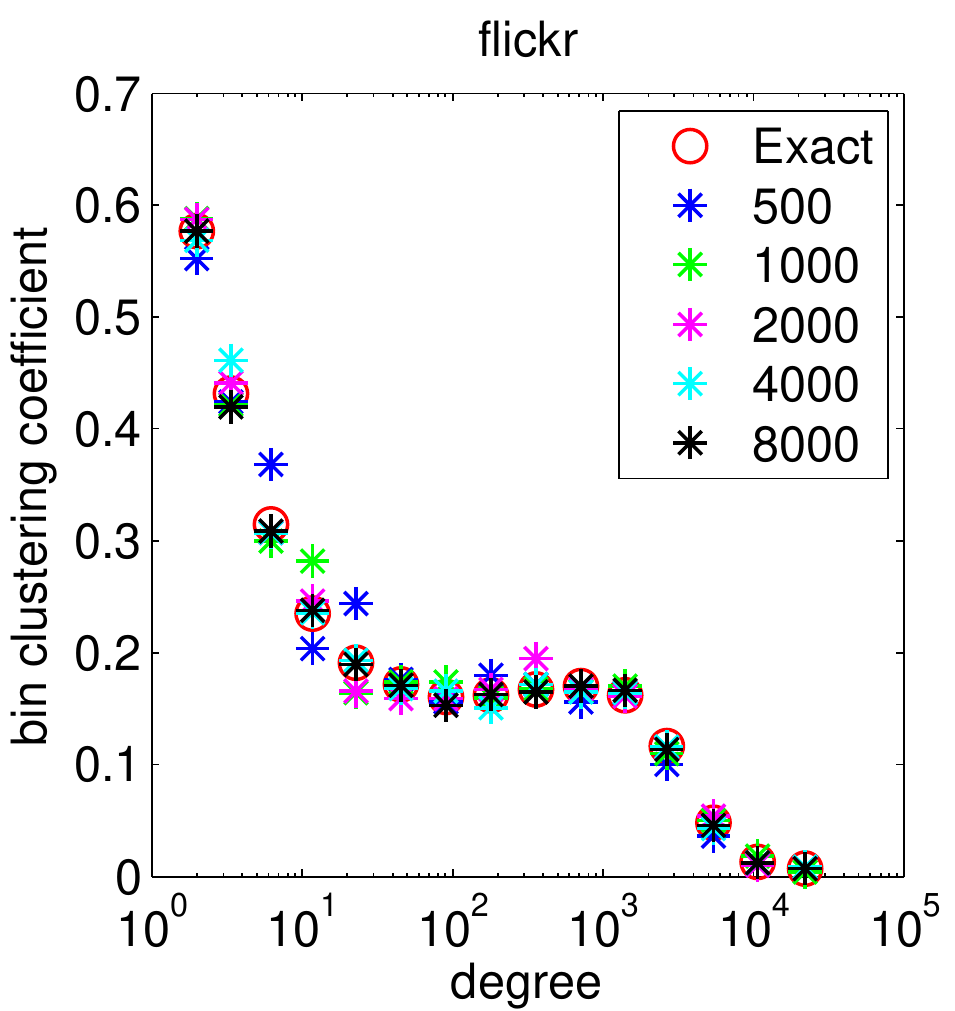}
  }
   \subfloat{
    \includegraphics[width=2in,trim=0 0 0 0]{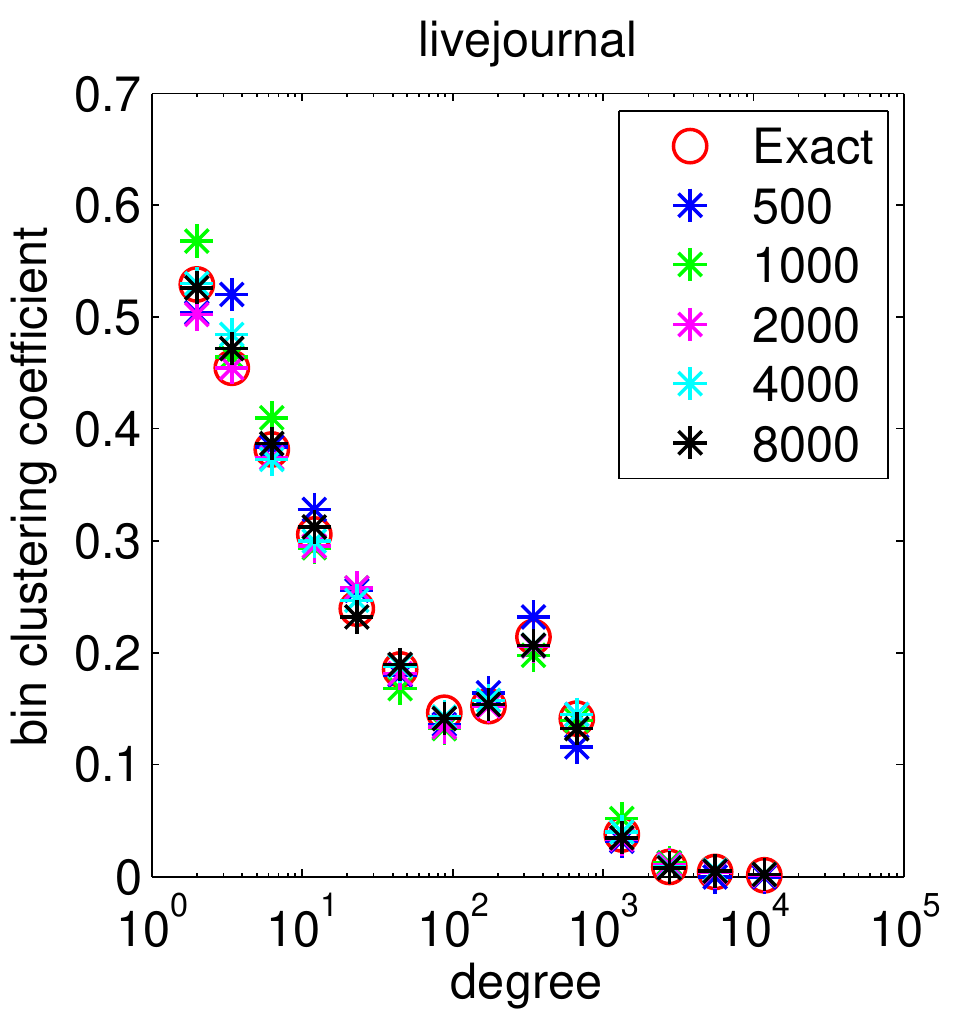}
  }
  
\caption{Computing degree-wise clustering coefficients using wedge
  sampling}
\label{fig:ccd}
\end{figure*}

We demonstrate the power of wedge sampling by estimating the
degree-wise clustering coefficients
$\{\dcc_d\}$. We also provide a sampling algorithm to estimate $T_d$, the number of triangles
incident to degree-$d$ vertices. \Alg{dcc} shows procedure \algdcc.

\begin{algorithm}
\caption{ \algdcc \label{alg:dcc}}
\begin{algorithmic}[1]
\STATE	 Pick $k$ uniform random vertices of degree $d$.
\STATE	 For each selected vertex $v$, choose a uniform random pair of neighbors of $v$ to generate a wedge.
\STATE	 Output the fraction of closed wedges as an estimate for $\dcc_d$.
\end{algorithmic}
\end{algorithm}

\begin{theorem} \label{thm:dcc} 
Set $k = \cei{0.5 \eps^{-2}\ln(2/\delta)}$. The algorithm \algdcc{} outputs an estimate
$\bar X$ for the clustering coefficient $\dcc_d$ such that
$|\bar{X} - \dcc_d| < \eps$
with probability greater than $(1-\delta)$.
\end{theorem}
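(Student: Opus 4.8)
The plan is to reuse the proof of \Thm{lcc} essentially verbatim, with the single change that the uniform distribution over all $n$ vertices is replaced by the uniform distribution over the $n_d$ vertices of degree $d$. First I would fix one wedge $w$ generated by a single iteration of \algdcc{} and let $X = X(w)$ be the indicator that $w$ is closed. Let $\cV_d$ denote the uniform distribution on $V_d$ and, for a vertex $v$, let $\cN_v$ denote the uniform distribution on unordered pairs of neighbors of $v$. Since the first step of \algdcc{} draws $v \sim \cV_d$ and the second step draws $(u,u') \sim \cN_v$ using only the sampled $v$, we have
$$ \EX[X] = \EX_{v \sim \cV_d}\big[\,\EX_{(u,u') \sim \cN_v}[X(\{(u,v),(u',v)\})]\,\big]. $$

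Next I would evaluate the inner expectation: for a fixed $v$, a uniformly random pair of its neighbors forms a closed wedge with probability equal to the fraction of the $W_v = {d_v \choose 2}$ wedges centered at $v$ that are closed, which is $T_v/W_v = C_v$. Averaging over $v \sim \cV_d$ and using the definition $\dcc_d = n_d^{-1}\sum_{v \in V_d} C_v$ from \Tab{notation} gives $\EX[X] = \dcc_d$. Hence the $k$ samples produced by \algdcc{} are independent $\{0,1\}$-valued random variables $X_1,\dots,X_k$ with common mean $\dcc_d$, and $\bar X = \frac1k\sum_{i \le k} X_i$ satisfies $\EX[\bar X] = \dcc_d$. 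Applying \Thm{Hoeffding} and plugging in $k = \cei{0.5\eps^{-2}\ln(2/\delta)}$ yields $\Prob{|\bar X - \dcc_d| \ge \eps} \le 2\exp(-2k\eps^2) \le \delta$, which is exactly the claim.

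I do not expect any real obstacle here; the argument is a one-line modification of the $\lcc$ proof. The only point requiring a little care is the first displayed identity: one must check that restricting the vertex draw to $V_d$ does not interfere with the ``uniform random pair of neighbors'' step, which is immediate because that step is a function of the sampled $v$ alone. Equivalently, one could remark that \algdcc{} samples a uniformly random wedge among all wedges whose center has degree $d$, and then the closedness probability of that wedge is $\dcc_d$ by the same conditional-expectation computation; but this reformulation is not needed, and as in the earlier theorems the sample count $k$ is independent of the graph size.
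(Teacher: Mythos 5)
Your proof is correct and follows essentially the same route as the paper, which simply notes that since $\dcc_d$ is the average clustering coefficient of degree-$d$ vertices, the argument of \Thm{lcc} applies verbatim with the uniform distribution restricted to $V_d$. You have merely spelled out the conditional-expectation computation that the paper leaves implicit, so there is nothing to add or correct.
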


\begin{proof} The proof of the following is similar to that
of \Thm{lcc}. Since $\dcc_d$ is the average clustering coefficient of a 
degree $d$ vertex, we can apply the same arguments as in \Thm{lcc}.
\qed
\end{proof}

By modifying the template given in \Sec{wedge}, we can also get estimates
for $T_d$. Now, instead of simply counting the fraction of closed wedges,
we take a weighted sum. \Alg{trid} describes the procedure \algtrid.
We let
$W_d = n_d \cdot {d \choose 2}$ denote the total number of wedges centered at degree $d$ vertices.

\begin{algorithm}
\caption{ \algtrid{} \label{alg:trid}}
\begin{algorithmic}[1]
\STATE Pick $k$ uniform random vertices of degree $d$. 
\STATE For each selected vertex $v$, choose a uniform random pair of neighbors of $v$ to generate a wedge.
\STATE For each wedge $w_i$ generated, let $Y_i$ be the associated random
variable such that
\begin{displaymath}
  Y_i =
  \begin{cases}
    0 & \text{if $w$ is open},\\
    \frac{1}{3} & \text{if $w$ is closed and has 3 vertices in $V_d$},\\
    \frac{1}{2} & \text{if $w$ is closed and has 2 vertices in $V_d$},\\
    1 & \text{if $w$ is closed and has 1 vertex in $V_d$}.
  \end{cases}
\end{displaymath}
\STATE $\bar Y = \frac{1}{k} \sum_i Y_i$.
\STATE Output $W_d \cdot \bar{Y}$ as the estimate for $T_d$.
\end{algorithmic}
\end{algorithm}
\medskip

\begin{theorem} \label{thm:dcc2} 
Set $k = \cei{0.5 \eps^{-2}\ln(2/\delta)}$. 
The algorithm \algtrid outputs an estimate
$W_d \cdot \bar Y$ for the $T_d$ with the following guarantee:
$|W_d \cdot \bar{Y} - T_d| < \eps W_d$
with probability greater than $1-\delta$.
\end{theorem}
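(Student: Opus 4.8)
The plan is to mirror the proof of \Thm{lcc}: present $Y_1,\dots,Y_k$ as independent random variables taking values in $[0,1]$, identify their common expectation as exactly $T_d/W_d$, and then invoke \Thm{Hoeffding}. Independence is immediate, since each $Y_i$ is a fixed function of the $i$-th generated wedge $w_i$ and the wedges are generated independently; boundedness is immediate because $Y_i \in \{0,\tfrac13,\tfrac12,1\} \subseteq [0,1]$. So the entire content of the theorem is the claim $\EX[\bar Y] = T_d/W_d$, after which everything follows mechanically.

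To compute $\EX[Y_i]$, first observe (exactly as in \Clm{random} and \Thm{lcc}, but restricted to degree-$d$ vertices) that Steps 1--2 of \algtrid{} produce a wedge $w$ that is uniform over the set $\mathcal{W}_d$ of all $W_d = n_d \binom{d}{2}$ wedges whose center lies in $V_d$. Hence $\EX[Y_i] = \frac{1}{W_d}\sum_{w \in \mathcal{W}_d} y(w)$, where $y(w) = 0$ if $w$ is open and $y(w) = 1/j(w)$ if $w$ is closed, with $j(w) \in \{1,2,3\}$ the number of vertices of the triangle containing $w$ that lie in $V_d$. Now regroup this sum by triangle. A triangle $t$ having exactly $j$ of its three vertices in $V_d$ gives rise to exactly $j$ closed wedges in $\mathcal{W}_d$ --- one centered at each degree-$d$ vertex of $t$, since that vertex is the center of precisely one of the three wedges of $t$ --- and each of these wedges carries weight $1/j$; thus $t$ contributes $j\cdot(1/j)=1$ when $j\ge 1$ and contributes $0$ when $j=0$ (it produces no wedge in $\mathcal{W}_d$ at all). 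Summing over all triangles, $\sum_{w \in \mathcal{W}_d} y(w)$ counts exactly the triangles incident to at least one degree-$d$ vertex, i.e., $T_d$. Therefore $\EX[Y_i] = T_d/W_d$ and $\EX[\bar Y] = T_d/W_d$.

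Finally, apply \Thm{Hoeffding} to $\bar Y$ with $k = \cei{0.5\eps^{-2}\ln(2/\delta)}$, which gives $\Prob{|\bar Y - T_d/W_d| \ge \eps} \le 2\exp(-2k\eps^2) \le \delta$; multiplying the event inside the probability by the positive constant $W_d$ yields $\Prob{|W_d\bar Y - T_d| \ge \eps W_d} \le \delta$, as desired. The only step requiring real care is the weighting bookkeeping in the second paragraph: one must check that $j(w)$ counts all three triangle vertices rather than just the two endpoints, that the center always lies in $V_d$ so $j(w)\ge 1$ on every closed wedge of $\mathcal{W}_d$, and that triangles with no degree-$d$ vertex correctly drop out of the sum. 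Once those are pinned down, the $1/j$ weights are precisely what is needed to make each relevant triangle contribute $1$, and the Hoeffding step is verbatim from the earlier theorems.
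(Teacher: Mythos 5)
Your proposal is correct and follows essentially the same route as the paper's proof: establish that the sampled wedge is uniform over the $W_d$ wedges centered at degree-$d$ vertices, show $\EX[Y]=T_d/W_d$ by grouping the weighted closed wedges by triangle (your ``each triangle with $j$ degree-$d$ vertices contributes $j$ wedges of weight $1/j$'' is exactly the paper's identity $|S_i|=i|S'_i|$), and then apply \Thm{Hoeffding} and scale by $W_d$. No gaps; the bookkeeping you flag as the delicate point is precisely the content of the paper's partition argument.
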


\begin{proof} For a single sampled wedge $w_i$, we define $Y_i$. We will
argue that the expected value of $\EX[Y]$ is exactly $T_d/W_d$
below. Once we have that, an application
of the Hoeffding bound of \Thm{Hoeffding} shows that $|\bar Y - T_d/W_d| < \eps$ with
probability greater than $1-\delta$. Multiplying this inequality by
$W_d$, we get $|W_d \cdot \bar{Y} - T_d| < \eps W_d$,
completing the proof.

To show $\EX[Y] = T_d/W_d$, partition the set $S$ of wedges centered on degree $d$ vertices into four
sets $S_0, S_1, S_2, S_3$. The set $S_i$ ($i \neq 0$) contains all
closed wedges containing
exactly $i$ degree-$d$ vertices. The remaining open wedges
go into $S_0$. For a sampled wedge $w$, if $w \in S_i, i \neq 0$, then
$Y_i = 1/i$. If $w \in S_0$, then nothing is added. The wedge $w$ is 
a uniform random wedge from those centered on degree-$d$ vertices. Hence,
$\EX[Y] = |S|^{-1}(|S_1| + |S_2|/2 + |S_3|/3)$. 

Now partition the set of triangles involving degree $d$ vertices into
three sets $S'_1, S'_2, S'_3$, where $S'_i$ is the set of triangles
with $i$ degree $d$ vertices. Observe that $|S_i| = i|S'_i|$. If a triangle
has $i$ vertices of degree $d$, then there are exactly $i$ wedges
centered in degree $d$ vertices (in that triangle). 
So, $|S_1| + |S_2|/2 + |S_3|/3 = $ $|S'_1| + |S'_2| + |S'_3|$ $= T_d$. Therefore,
$\EX[Y] = T_d/W_d$.
\qed
\end{proof}

\subsection{Computing the clustering coefficient for bins of vertices}
Algorithms in the previous section present how to compute the
clustering coefficient of vertices of a given degree.  In practice, it
may be sufficient to compute clustering coefficients over bins of
degrees.  Wedge sampling algorithms can still be extended for bins of
degrees by a small adjustment of the sampling procedure.
Within each bin, we weight each vertex according to the number of
wedges it produces. This 
guarantees that each wedge in the bin is equally likely to be selected.  

\subsection{Experimental results} 
\Fig{ccd} shows results on three graphs for clustering coefficients;
the remaining figures are shown in the \Fig{ccd-app} in the appendix. 
The data is grouped in logarithmic
bins of degrees, i.e., $\set{2}, \set{3,4}, \set{5,6,7,8}, \dots$.
In other words, $2^{i-1} <d_v \leq 2^i$ form
the $i$-th bin.  
We show the estimates with increasing number of samples. At 8K
samples, the error is expected to be less than 0.02, which is
apparent in the plots.
Observe that even 500 samples yields a reasonable estimate in terms of
the differences by degree.

\Fig{ccdtimes} shows the time to calculate all $C_d$ values, showing a
tremendous savings in runtime as a result of using wedge sampling. In
this figure, runtimes are normalized with respect to the runtimes of
full enumeration.  As the figure shows, wedge sampling takes only a
tiny fraction of the time of full enumeration especially for large
graphs.

\begin{figure}[htp]
  \centering
  \includegraphics[width= 0.45\textwidth]{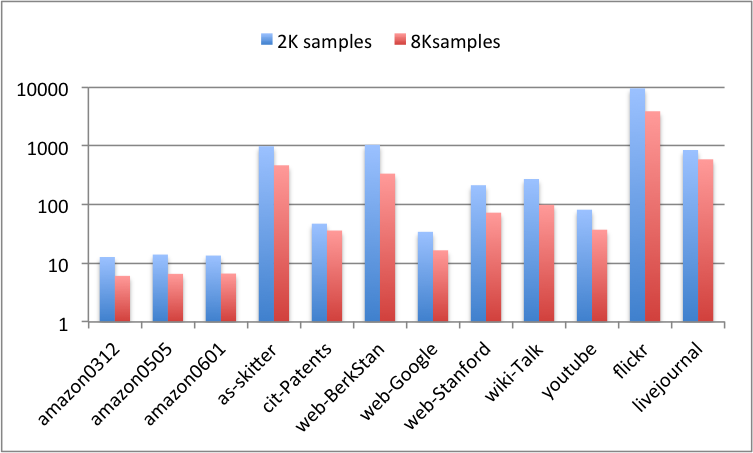}
  \caption{Speed-up in degree-wise clustering coefficient computation time for increasing numbers of wedge samples}
  \label{fig:ccdtimes}
\end{figure}

\section{Generating a uniform sample of the triangles}
While most triadic measures focus on  the number of triangles  and their distribution,  the  triangles themselves,  not only their count,  can reveal a lot of information about the graph. The authors' recent work~\cite{DuPiKo12} has looked at the relations among the degrees of the vertices of triangles. In these experiments, a full enumeration of the triangles was used, which limited the sizes of the graphs we could use.  To avoid this computational burden, 
a uniform sampling of the triangles can be used. 

{\em Wedge sampling provides a uniform sampling the triangles of a graph}.  Consider the uniform wedge sampling of \Alg{C}. Some of these wedges will be closed, so we 
generate a random set of triangles. Each such triangle is an independent, uniform random triangle. This is because wedges are chosen from the uniform distribution,
and every triangles contains exactly $3$ closed wedges.
\Fig{trisample} presents the results using triangle sampling to estimate the percentage of triangles where the maximum  to minimum degree ratio is $\geq10$, which is  motivated by an experiment in~\cite{DuPiKo12}.   The figure shows that accurate results can be achieved by using only 500 triangles and that  wedge sampling provides an unbiased selection of triangles. 
The expected number of wedges to be sampled to generate $T_s$  triangles  is $3T_s/\gcc$, which means the method  will be effective  unless the clustering coefficient is extremely small.

\begin{figure}[htp]
  \centering
  \includegraphics[width= 0.45\textwidth]{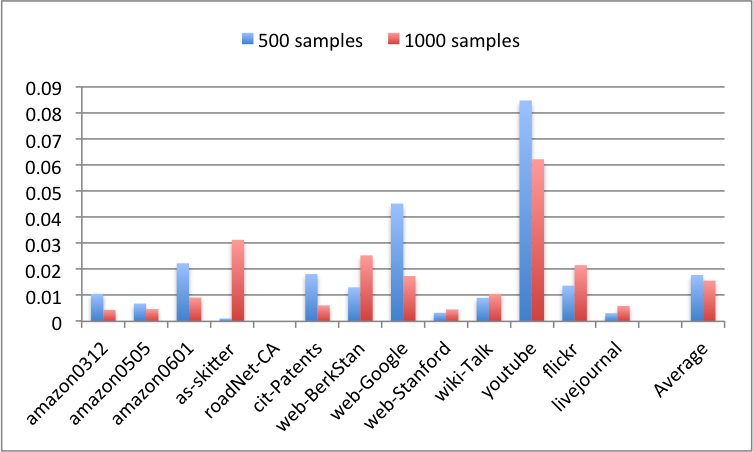}
  \caption{Error in computing  percentage of triangles where the maximum  to minimum degree ratio is $\geq10$ for increasing number of samples.  }
  \label{fig:trisample}
\end{figure}

\section {Comparison to Doulion}
\label{sec:comp}

Doulion~\cite{TsKaMiFa09} is an alternative sampling mechanism for estimating the number
of triangles in a graph. It has a single parameter $p$. 
Each edge is chosen independently at random with probability $p$, leading to a subgraph
of expected size $pm$ ($m$ is the total number of edges). We count the number
of triangles $T'$ in this subgraph, and estimate the total number of triangles
by $T'/p^3$. It is not hard to verify that this is correct in expectation, but
bounding the variance requires a lot more work. Some concentration
bounds for this estimate are known \cite{TsKoMi11,PaTs12}, but they depend
on the maximum number of triangles incident to an edge in the graph.
So they do not have the direct form of \Thm{kappa}. Some bad examples for
Doulion have been observed~\cite{YoKi11}. 
\begin{figure}[htp]
  \centering
  \includegraphics[width= 0.45\textwidth]{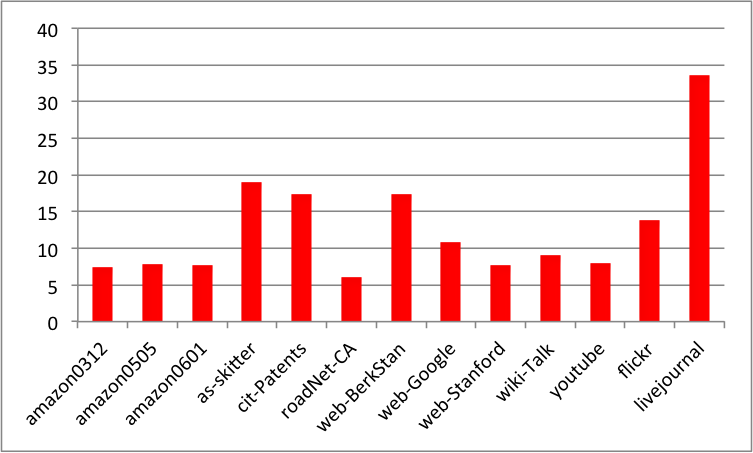}
  \caption{Speedups of wedge sampling with 32K samples over Doulion with $p=1/25$.}
  \label{fig:Douliontimes}
\end{figure}
Doulion is extremely elegant and simple, and leads to an overall reduction in graph size
(so a massive graph can now fit into memory).
Common values of $p$ used are $p=1/10$ and $p=1/25$.

We show that wedge sampling performs at least as good as Doulion in terms of accuracy, and has better runtimes.
We run wedge sampling with 32K samples. We start with setting the Doulion parameter $p = 1/25$, which has 
been used in the literature. Note that the size of the Doulion sample is $m/25$, which is much
larger than 32K. (For {\tt amazon0312}, one of the smaller graphs we consider, $m/25 \approx 90K$.)	
We run each algorithm 100 times. The (average) runtimes are compared in \Fig{Douliontimes}, where
we see that wedge sampling is always competitive. The accuracy of the estimate
is comparable for both algorithms. In \Tab{Dgcc} in the appendix, we present the minimum, maximum, and standard deviation for the 100 runs.
This also shows the good convergence properties of both algorithms, since even the minimum and maximum values are fairly close to the true clustering coefficient.
We also try $p=1/50$ and note that wedge sampling with 32K samples is still faster in terms of runtime while offering somewhat better accuracy.
By setting $p=1/100$, the sample size of Doulion becomes comparable to 32K, but Doulion is quite inaccurate (the range
between the maximum and the minimum is large).

In \Tab{Dlcc} of the appendix, we also compare results for the local clustering coefficient.

\section{Significance and Impact} 
We have proposed a series of wedge-based algorithms for computing
various triadic measures on graphs. Our algorithms come with
theoretical guarantees in the form of specific error and confidence
bounds.  We want to stress that the number of samples required to
achieve a specified error and confidence bound is independent of the
graph size.  For instance, 38,000 samples guarantees and error less
than 0.1\% with 99.9\% confidence \emph{for any graph}, which gives
our algorithms an incredible potential for scalability. The limiting
factors have to do with determining the sampling proportions; for instance, we
need to know the degree of each vertex and the overall degree
distribution to compute the global clustering coefficient.

The flexibility of wedge sampling along with the hard error bounds
essentially redefines what is feasible in terms of graph analysis. The
very expensive computation of clustering coefficient is now much
faster (enabling more near-real-time analysis of networks) and we can
consider much larger graphs than before. In an extension of this work,
we are pursuing a MapReduce implementation of this method that scales
to O(100M) nodes and O(1B) edges, needing only a few minutes of
compute time (based on preliminary results).

With triadic analysis no long being a computational burden, we can
extend our horizons into new territories and look at directed
triangles, attributed triangles (e.g., we might compare the clustering
coefficient for ``male'' and ``female'' nodes in a social network),
evolution of triadic connections, higher-order structures such a
4-cycles and 4-cliques, and so on.

\clearpage
\appendix 

\section{Appendix}

Various detailed results are shown in the appendix.

We present the runtime results for global clustering coefficient computations in \Tab{Dgcc}. As mentioned earlier, we perform 100 runs of each
algorithm and show the minimum, maximum, and standard deviations of the output estimates. We also show the relative speedup of wedge
sampling over Doulion with $p=1/50$.

We also used Doulion to compute the local clustering coefficient. For this purpose, we predicted the number of triangles incident to a vertex by counting  the triangles  in the sparsified graph and then dividing this number by $p^3$. The results of our experiments are presented in \Tab{Dlcc}, which shows that Doulion fails in accuracy even for $p = 1/10$. 
This is because  local clustering coefficient  is a finer level statistic,  which becomes a challenge for Doulion. Wedge sampling on the other hand, keeps its accurate estimations with low variance.

\begin{table*}[p]	
 \caption{  Estimating the global clustering coefficient }
\label{tab:gcc}
  \centering\small
  \begin{tabular}{|>{\tt}r@{\,}|*{12}{r@{\,}|}}
    \cline{7-13}
    \multicolumn{6}{c|}{} 
    & \multicolumn{3}{c|}{Wedge Sampling}
    & \multicolumn{4}{c|}{Time (sec)} \\
    \hline
            \cta{Graph} &\ct{$n$}&\ct{$m$}&\ct{$W$}&\ct{$T$}&\ct {$\gcc$} &\ct{2K}&\ct{8K}&\ct{32K} &\ct{E}&\ct{2K}&\ct{8K}&\ct{32K} \\
    \hline
                          amazon0312 &   401K &  2350K &    69M &  3686K &	0.160&	0.163&	0.161&	0.160  & 0.261	& 0.004& 	0.007& 	0.016\\ 
             amazon0505 &   410K &  2439K &    73M &  3951K &  	0.162&	0.158&	0.165&	0.163  & 0.269	&  0.005& 	0.007& 	0.016\\
             amazon0601 &   403K &  2443K &    72M &  3987K &  0.166&	0.161&	0.164&	0.167 & 0.268	& 0.004 &	0.007& 	0.017\\
             as-skitter &  1696K & 11095K & 16022M & 28770K & 0.005	&	0.006&	0.006&	0.006 & 90.897		& 0.015& 	0.019& 	0.026\\
                        cit-Patents &  3775K & 16519K &   336M &  7515K & 0.067&	0.064&	0.067&	0.068 & 3.764	& 0.035& 	0.040& 	0.056\\
            	             roadNet-CA &  1965K &  2767K &     6M &   121K & 0.060&	0.061&	0.058&	0.058 & 0.095& 	0.018& 	0.022& 	0.037\\
          	           web-BerkStan &   685K &  6649K & 27983M & 64691K & 0.007&	0.005&	0.006&	0.007  & 54.777&  	0.007& 	0.009& 	0.016\\
             web-Google &   876K &  4322K &   727M & 13392K & 0.055&	0.055&	0.054&	0.056 & 0.894	& 	0.009& 	0.011& 	0.020 \\
           web-Stanford &   282K &  1993K &  3944M & 11329K & 0.009&	0.013&	0.008&	0.009 & 6.987 &	0.003	& 0.005& 	0.011\\
              wiki-Talk &  2394K &  4660K & 12594M &  9204K &  0.002	&0.004&	0.003&	0.002 & 20.572	  & 0.021& 	0.024	& 0.033 \\
              youtube & 1158K&  2990K& 1474M& 3057K& 0.006& 0.005& 0.006&0.006& 2.740 &0.011 & 0.013& 0.021\\
              flickr & 1861K& 15555K  &14670M & 548659K &   0.112 &0.110 & 0.113 &0.112&   567.160&0.019 &0.026 & 0.051\\
              livejournal & 5284K & 48710K& 7519M& 310877K& 0.124 &0.127 &0.126 &0.124 &102.142 & 0.048& 0.051& 0.073\\
    \hline
 \end{tabular}
\end{table*}

\begin{table*}[p]
\caption{Estimating the local clustering coefficients}  
\label{tab:lcc} 
  \centering\small
  \begin{tabular}{|>{\tt}r@{\,}|*{9}{r@{\,}|}}
    \hline
    \cline{3-9}
    \multicolumn{1}{|c|}{} 
    &\multicolumn{1}{c|}{} 
    & \multicolumn{3}{c|}{Estimate}
    & \multicolumn{4}{c|}{Time (sec)} \\    \cline{3-9}
            \cta{Graph} & \ct {$\lcc$} &\ct{2K}&\ct{8K}&\ct{32K} &\ct{E}&\ct{2K}&\ct{8K}&\ct{32K} \\\hline
amazon0312 &	0.421	&	0.427&	0.417&	0.420 & 0.301	& 0.001	& 0.002	& 0.008\\
amazon0505&	0.427	&	0.422&	0.423&	0.426 & 0.310		& 0.001& 	0.002& 	0.008\\
amazon0601	&0.430	&	0.435&	0.421&	0.430 & 0.314	& 0.001& 	0.002	& 0.007\\
as-skitter	&0.296 &	0.280&	0.288&	0.297 & 88.290		& 0.002	& 0.009& 	0.036\\
cit-Patents&	0.092&		0.089&	0.094&	0.091&  4.081 	& 0.001	& 0.003	& 0.012\\
roadNet-CA&	0.055&		0.049&	0.059&	0.054 & 0.112	& 0.000	& 0.002& 	0.006\\
web-BerkStan	&0.634 &	0.629&	0.639&	0.633&  53.892	& 0.006	& 0.021& 	0.085\\
web-Google	&0.624	&0.624	&0.619&	0.628 & 0.996	& 	0.001	& 0.004	& 0.013 \\
web-Stanford&	0.629&		0.612&	0.635&	0.633 & 6.868	& 0.002& 	0.010	& 0.038\\
wiki-Talk	&0.201&	0.199&	0.194&	0.199 & 20.254		& 0.007	& 0.028& 	0.108 \\
 youtube & 0.128& 0.130&0.132 & 0.131&18.948 & 0.002&0.008 & 0.031\\
              flickr & 0.375 & 0.369  &0.371 &0.377 & 575.493 &0.001 & 0.006& 0.021\\ 
              livejournal & 0.345 & 0.338 &0.348 & 0.345& 102.142 & 0.001 & 0.004&0.015 \\\hline
\end{tabular}
\end{table*}

\begin{figure*}[p]
\centering
 
 \subfloat{\label{fig:ccd:amazon0505}
    \includegraphics[width=2in,trim=0 0 0 0]{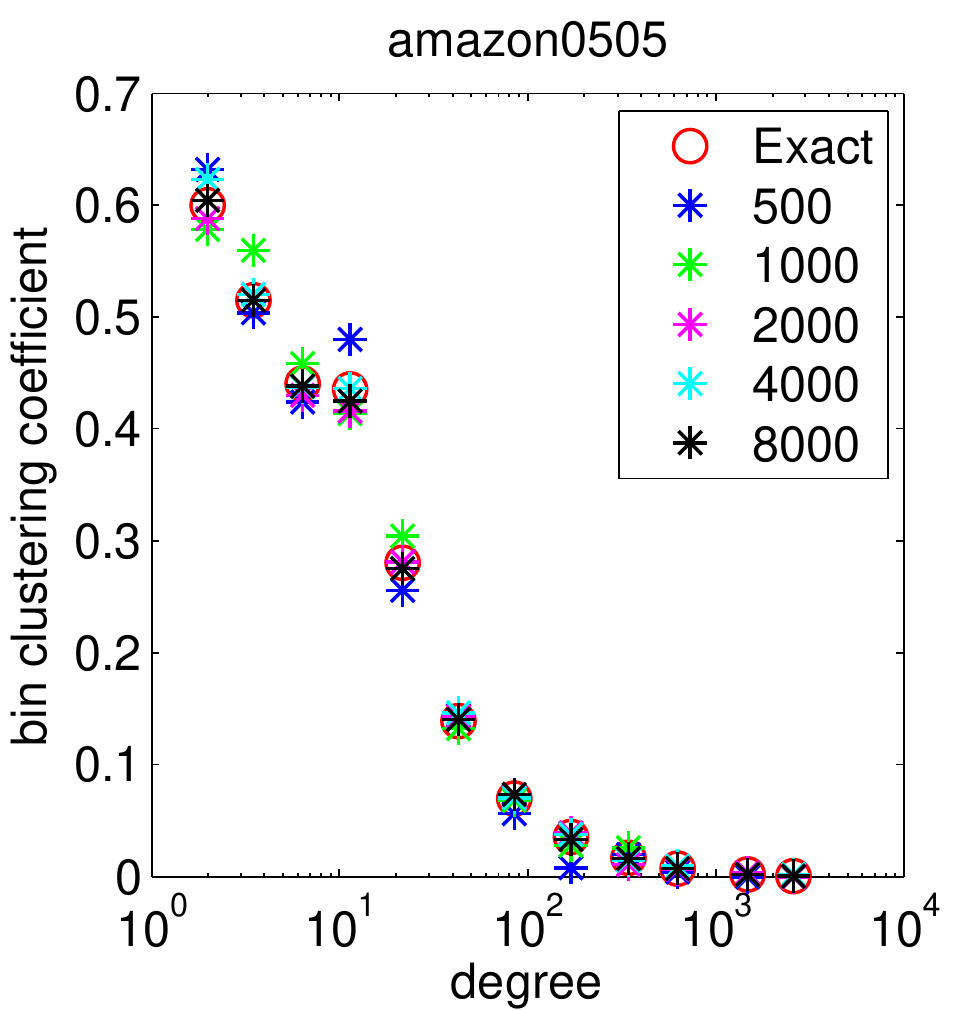}
  }
    \subfloat{\label{fig:ccd:amazon0601}
    \includegraphics[width=2in,trim=0 0 0 0]{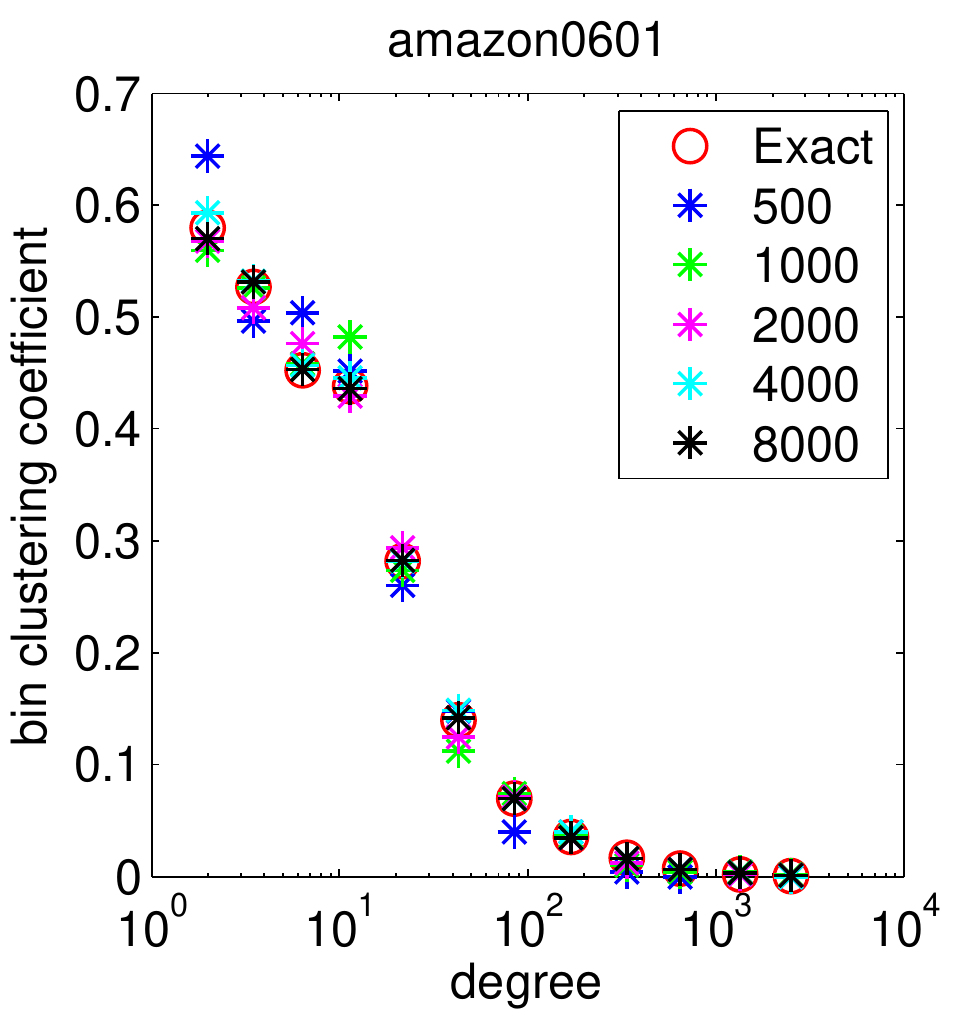}
  }
  \subfloat{\label{fig:ccd:as-skitter}
    \includegraphics[width=2in,trim=0 0 0 0]{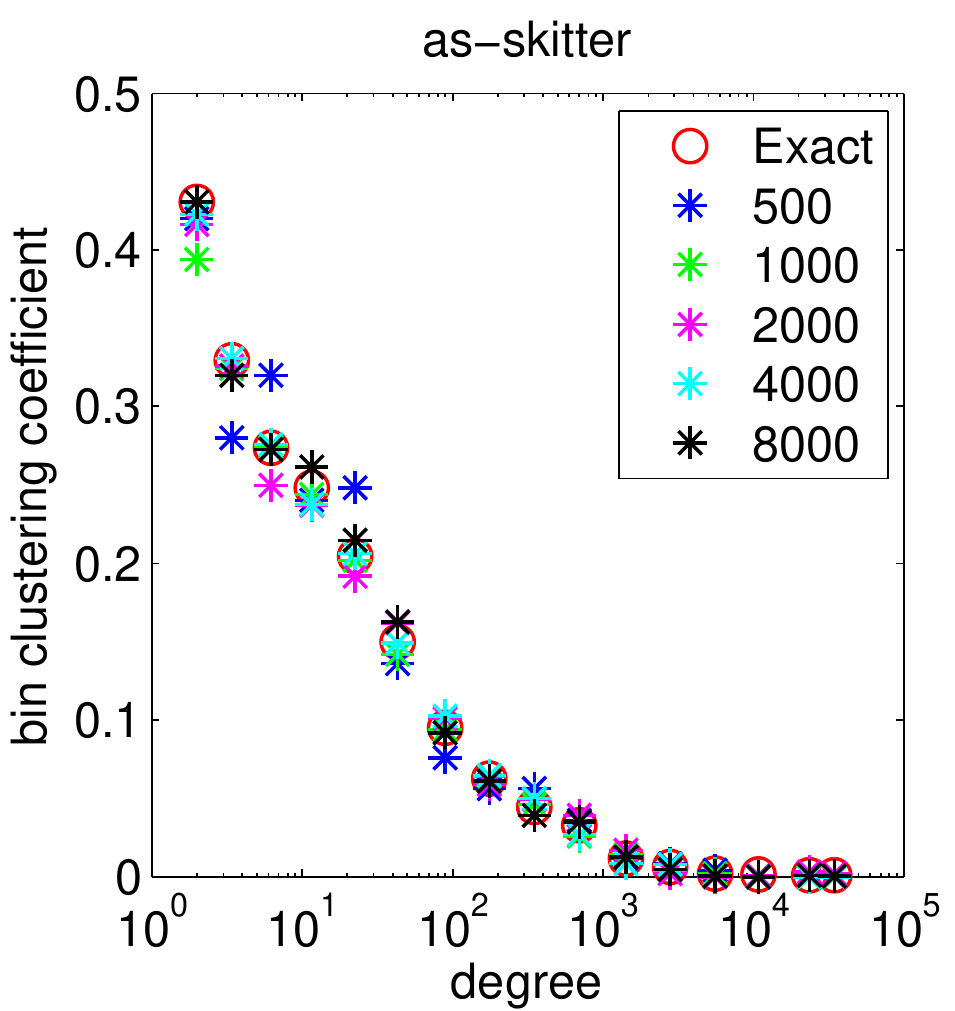}
  }
  \\
  \subfloat{\label{fig:ccd:cit-Patents}
    \includegraphics[width=2in,trim=0 0 0 0]{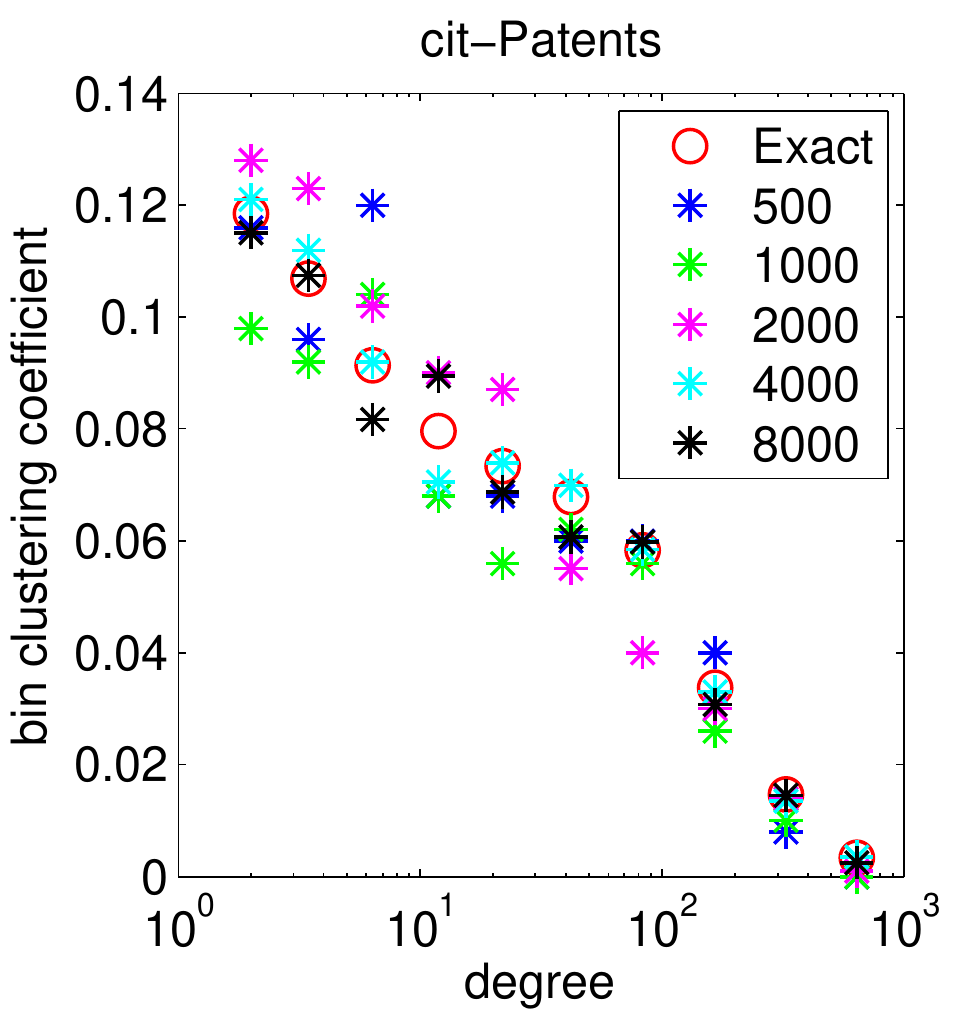}
  } 
  \subfloat{\label{fig:ccd:web-BerkStan}
    \includegraphics[width=2in,trim=0 0 0 0]{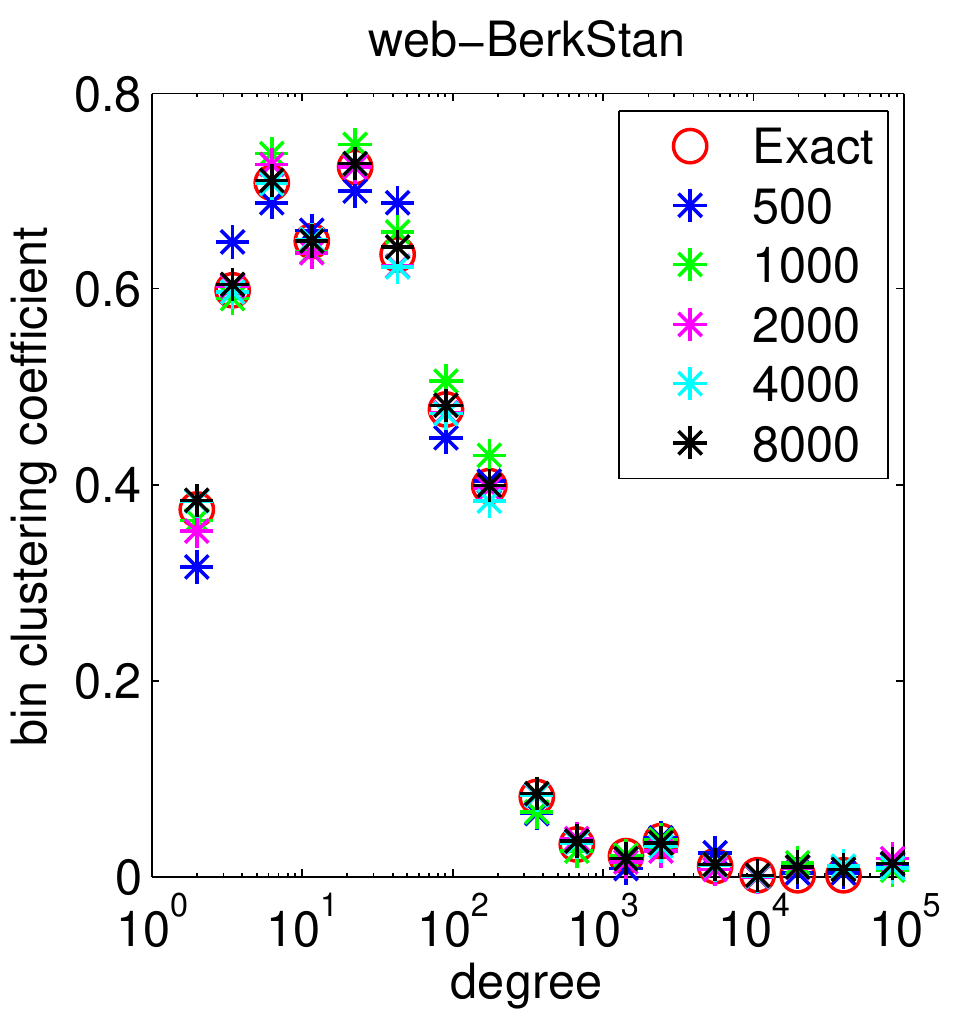}
  }
    \subfloat{\label{fig:ccd:web-Google}
    \includegraphics[width=2in,trim=0 0 0 0]{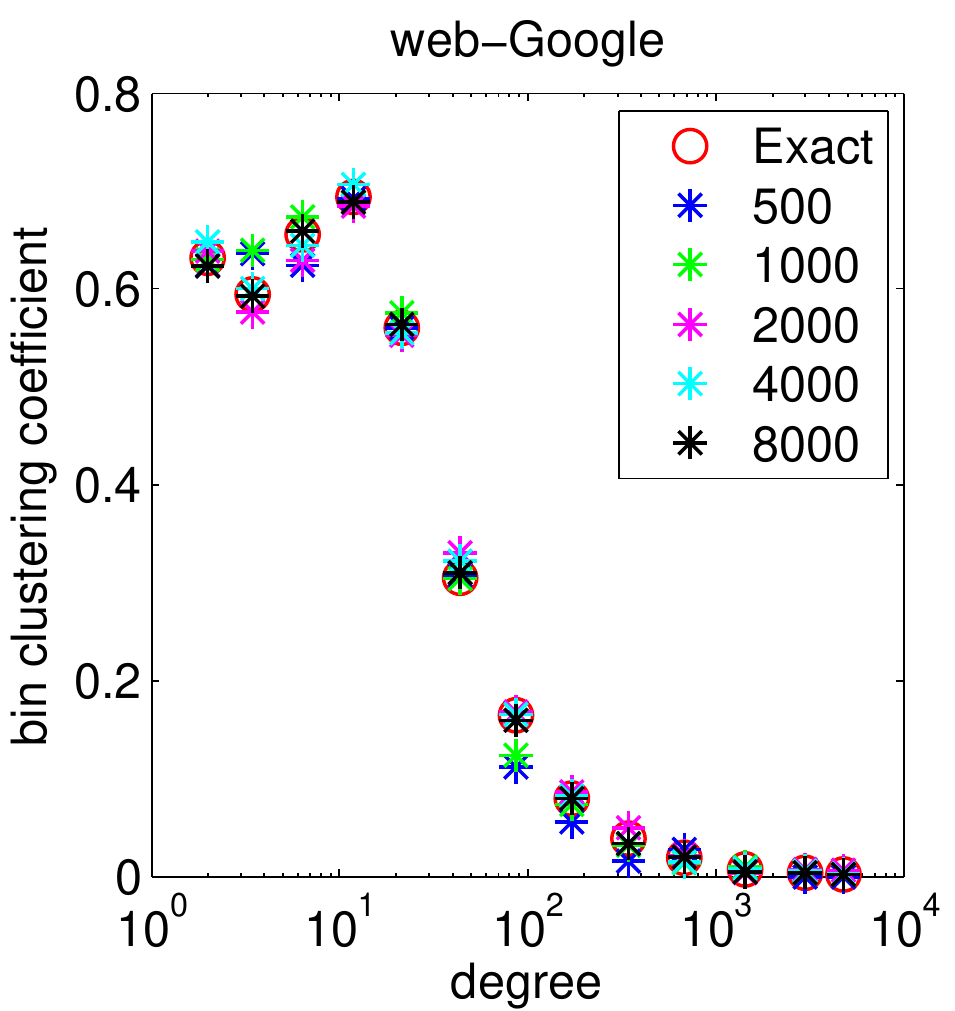}
  }
  \\
  \subfloat{\label{fig:ccd:web-Stanford}
    \includegraphics[width=2in,trim=0 0 0 0]{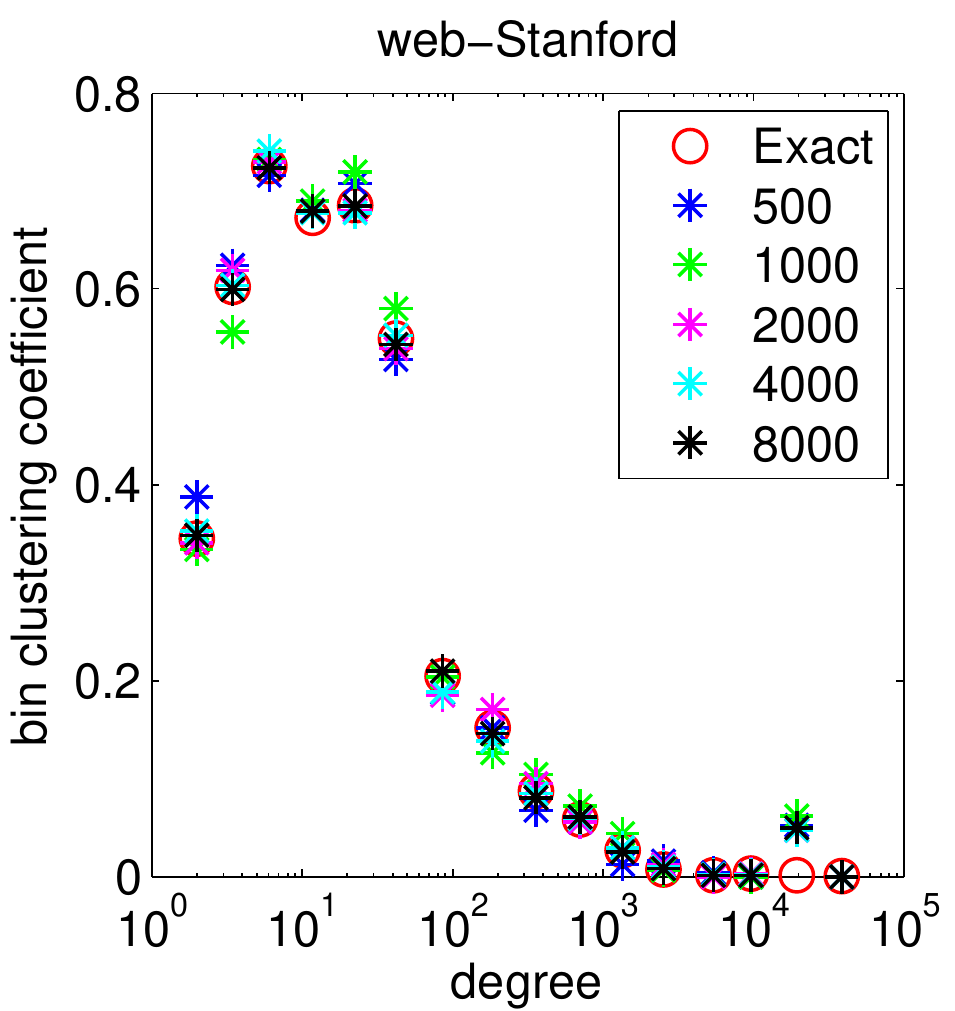}
  }
  \subfloat{\label{fig:ccd:wiki-Talk}
    \includegraphics[width=2in,trim=0 0 0 0]{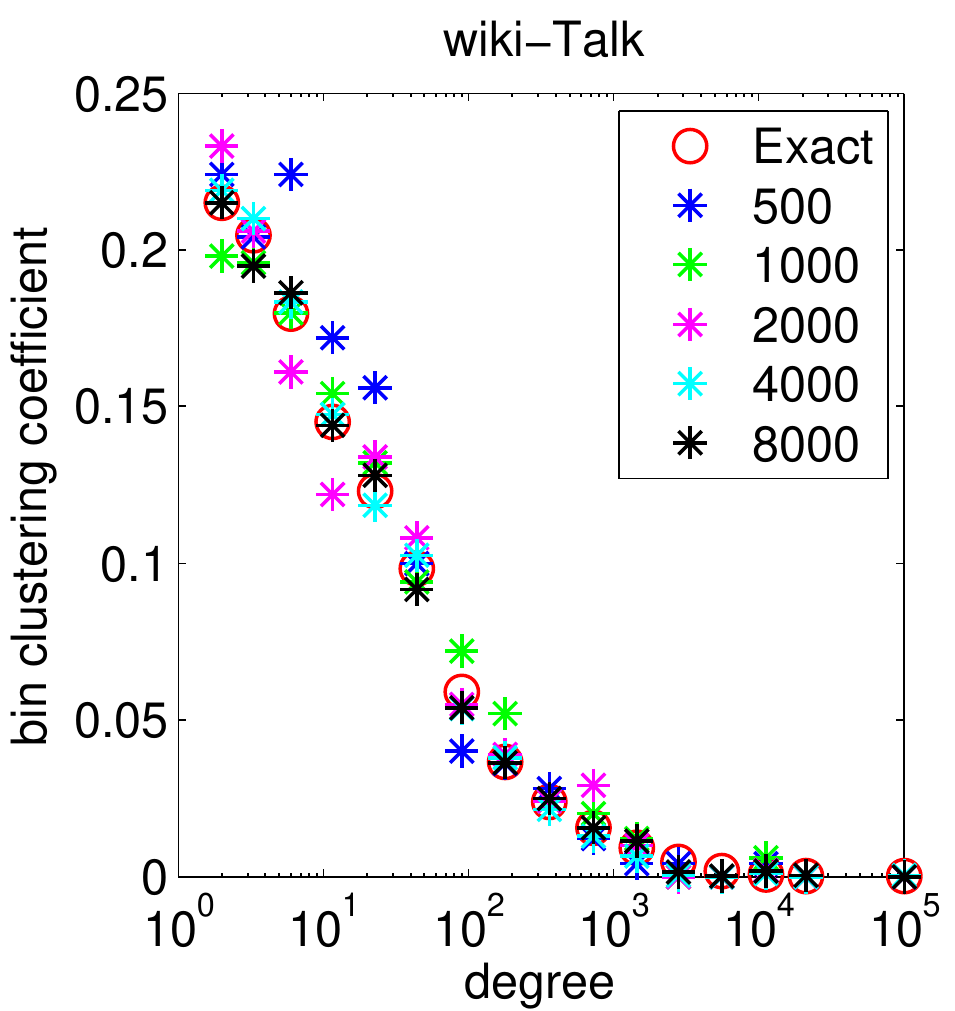}
  }    

\caption{Computing degree-wise clustering coefficients using wedge sampling}
\label{fig:ccd-app}
\end{figure*}

\clearpage

\begin{table*}[p]
 \caption{Comparison of wedge sampling  and Doulion for computing  the global clustering coefficient }
\label{tab:Dgcc}  
  \centering\small
  \begin{tabular}{|>{\tt}r@{\,}||r|rrr|rrr|rrr|rrr|r|}
    \cline{3-15}
    \multicolumn{2}{c|}{ } 
    & \multicolumn{3}{c|}{Wedge Sampling}
    & \multicolumn{3}{c|}{Doulion 1/25}
    & \multicolumn{3}{c|}{Doulion 1/50}
    & \multicolumn{3}{c|}{Doulion 1/100}
    & \multicolumn{1}{c|}{Time } \\
    \hline
    \hline
\cta{Graph} &\ct {$\gcc$} & \ct{min} &\ct{max}&  \ct{sd}&  \ct{min} &\ct{max}&  \ct{sd}&  \ct{min} &\ct{max}&  \ct{sd}&  \ct{min} &\ct{max}&  \ct{sd}  & \ct{\small D50/WS}  \\  
    \hline
          amaz0312  &    .160&   .155   & .166   & .002&  . .137  &  .188  &  .011  &  .093  &  .234  &  .031& .000   & .392  &  .080 & 7.06\\
             amaz0505 &     .162&  .159  &  .167    &.002     &.133  &  .193    &.011   & .098 &   .241  &  .028 &  .000   & .370   & .081 & 7.37 \\
             amaz0601 &     .166&  .162 &   .172  &  .002    &  .140   & .193 &   .010 &   .088   & .260   & .028  & .000    &.457 &   .078 & 7.30\\
             skitter         &      .005 & .005  &  .007 &   .000   &   .005  &  .006   & .000  &  .004  &  .007  &  .001&  .002    &.008  &  .001 & 17.80	\\
                     cit-Pat &     .067&  .064    &.071    &.001   &  .060    &.077  &  .003   & .042    &.099   & .010 & .027   & .116  &  .022 & 15.76\\
                 road-CA &     .060&  .058    &.065  &  .001   &       .000  &  .133   & .023    &     .000 &   .250  &  .062  & .000 &   1.001  &  .193 & 5.84 \\
      	      w-BerSta &     .007&  .006    &.008   & .001   &   .006   & .008   & .000    &.006   &.008   & .000 & .004  &  .011  &  .001  &16.28\\
              w-Google &     .055&   .052  &  .059   & .001    &  .051  &  .060   & .002   & .044 &   .071  &  .005 &  .021   & .107   & .016 &1.20\\
                  w-Stan &     .009&    .007   & .010 &   .001   & .007 &   .010 &   .001  &  .006  &  .012   & .001 &  .003  &  .018  &  .003 &7.18 \\
                   wiki-T &     .002	&     .002   & .003   & .000   &  .002  &  .002&    .000   & .002   & .003  &  .000  & .001  &  .004  &  .001 &8.52 \\
               youtube &     .006&     .005    &.007   & .000     & .005   & .007 &   .001  &  .004  &  .010  &  .001&  .000    & .018  &  .004 & 7.55\\
               flickr       &     .112 &    .108   & .117 &   .002    &  .110  &  .115   & .001 &   .107 &   .118 &   .002 & .098 &   .125   & .005 &11.38 \\
              livejour  &    .124 &      .120   & .128   & .002   & .121 &   .128  &  .001   & .116  &  .130  &  .003 & .105  &  .143   & .007 &3.49\\
            \hline

 \end{tabular}
\end{table*}

\begin{table*}[p]
 \caption{Comparison of wedge sampling  and Doulion for computing  the local clustering coefficient }
\label{tab:Dlcc}  
  \centering\small
  \begin{tabular}{|>{\tt}r@{\,}||r|rrr|rrr|rrr|rrr|}
    \hline 
    \multicolumn{1}{|c|}{ } 
&     \multicolumn{1}{c|}{ } 
    & \multicolumn{3}{c|}{Wedge Sampling}
    & \multicolumn{3}{c|}{Doulion 1/10}
    & \multicolumn{3}{c|}{Doulion 1/25}
    & \multicolumn{3}{c|}{Doulion 1/50}\\
   \cline{3-14}
            \cta{Graph} &\ct {$\lcc$} & \ct{min} &\ct{max}&  \ct{stdev}&  \ct{min} &\ct{max}&  \ct{stdev}&  \ct{min} &\ct{max}&  \ct{stdev}&  \ct{min} &\ct{max}&  \ct{stdev}   \\  
    \hline       
             amaz0312  &	.421   &  .415   &  .426   &  .003   &  .395    & .463  &   .014  &   .318   &  .558  &   .047   &  .195   & .869     &.140 \\
             amaz0505 &    .427    & .421    & .432    & .003     &.395   &  .463   &  .014    & .338    & .568  &   .056    & .195	 & .869    & .140 \\
             amaz0601 &      .430    & .423     &.437    & .003   &  .403     &.466   &  .013  &   .329   &  .633  &   .058 &    .239  &.819    & .120 \\
             skitter &     .296   & .288   &  .303   &  .003   &  .272   &  .322   &  .011    & .206  &   .384    & .039    & .122  &.667     &.105 \\
                        cit-Pat &  .092   &  .088   &  .096     &.002   &  .085    & .099    & .003   &  .066    & .126   &  .011   &  .028  & .199    & .035 \\
            	             road-CA &          .055  &   .052   &  .058   &  .001   &  .038    & .071    & .006     &     .000    & .118  &   .022    &      .000  	  &  .279   &  .062 \\
	                        w-BerSta &   .634  &   .627   &  .641   &  .003     &.586   &  .700    & .024   &  .532    & .808    & .057     &.339   & 1.273  &   .165 \\
             w-Google &     .624   &  .615  &   .630 &    .003  &   .580   &  .688    & .021    & .471    & .772   &  .063     &.335   &1.276    & .183 \\
           w-Stan &      .629    & .622  &   .636    & .003    & .532  &   .737   &  .038     &.441    & .982   &  .109  &   .218  &1.218     &.230 \\
              wiki-T &  .201   & .195   &  .206  &   .002   &  .171    & .240    & .015   &  .055   &  .379   &  .067   &  .006   &.609  &   .160 \\
              youtube &   .128    & .167     &.179     &.002 &    .128     &.218  &   .015   &  .055    & .294  &   .057    & .007  &.767  &   .182 \\
              flickr &  .375   &  .368     &.381    & .003    & .316   &  .411     &.016     &.212    & .553    & .066  &   .086  &  .774    & .154 \\
              livejour &  .345    & .337    & .355   &  .003   &  .330   &  .359     &.006    & .296  &   .401  &   .023  &   .214   & .500    & .060 \\    
                          
    \hline
 \end{tabular}
\end{table*}

\end{document}